\newif\ifnotes
\newcommand{\stackalign}[1]{
  \vcenter{
    \Let@ \restore@math@cr \default@tag
    \baselineskip\fontdimen10 \scriptfont\tw@
    \advance\baselineskip\fontdimen12 \scriptfont\tw@
    \lineskip\thr@@\fontdimen8 \scriptfont\thr@@
    \lineskiplimit\lineskip
    \ialign{\hfil$\m@th\scriptstyle##$&$\m@th\scriptstyle{}##$\crcr
      #1\crcr
    }
  }
}
\def\legendre@dash#1#2{\hb@xt@#1{%
  \kern-#2\p@
  \cleaders\hbox{\kern.5\p@
    \vrule\@height.2\p@\@depth.2\p@\@width\p@
    \kern.5\p@}\hfil
  \kern-#2\p@
  }}
\def\@legendre#1#2#3#4#5{\mathopen{}\left(
  \sbox\z@{$\genfrac{}{}{0pt}{#1}{#3#4}{#3#5}$}%
  \dimen@=\wd\z@
  \kern-\p@\vcenter{\box0}\kern-\dimen@\vcenter{\legendre@dash\dimen@{#2}}\kern-\p@
  \right)\mathclose{}}
\newcommand\legendre[2]{\mathchoice
  {\@legendre{0}{1}{}{#1}{#2}}
  {\@legendre{1}{.5}{\vphantom{1}}{#1}{#2}}
  {\@legendre{2}{0}{\vphantom{1}}{#1}{#2}}
  {\@legendre{3}{0}{\vphantom{1}}{#1}{#2}}
}
\def\dlegendre{\@legendre{0}{1}{}}
\def\tlegendre{\@legendre{1}{0.5}{\vphantom{1}}}
\def\namedlabel#1#2{\begingroup
	#2%
	\def\@currentlabel{#2}%
	\phantomsection\label{#1}\endgroup
}
\newcommand{\E}{\mathbb{E}}
\newcommand{\F}{\mathbb{F}}
\newcommand{\N}{\mathbb{N}}
\newcommand{\calC}{\mathcal{C}}
\newcommand{\calO}{\mathcal{O}}
\newcommand{\rw}{{\sf RW}}
\newcommand{\sw}{s{\sf RW}}
\newcommand{\boldone}{{1\!\!1}}
\newcommand{\ep}{\varepsilon}
\newcommand{\prob}[1]{{\rm Pr}_{#1}}
\newcommand{\R}{\mathbb{R}}
\newcommand{\backwardsvec}[1]{\reflectbox{\ensuremath{\vec{\reflectbox{\ensuremath{#1}}}}}}
\newtheorem{lemma}{Lemma}
\newtheorem{fact}{Fact}
\newtheorem{clm}{Claim}
\newtheorem{defn}{Definition}
\newtheorem{thm}{Theorem}
\title{Analyzing Ta-Shma's Code via the Expander Mixing Lemma 
}
\author{
Silas Richelson\thanks{UC Riverside. Email:
  \texttt{silas@cs.ucr.edu}.} \and
  Sourya Roy\thanks{UC Riverside. Email:
 \texttt{sourya.roy@email.ucr.edu}.}
}
\begin{document}
\date{}
\maketitle
\setcounter{page}{0}

\thispagestyle{empty}
\begin{abstract}
Random walks in expander graphs and their various derandomizations (\emph{e.g.}, replacement$/$zig-zag product) are invaluable tools from pseudorandomness.  Recently, Ta-Shma used $s$-wide replacement walks in his breakthrough construction of a binary linear code almost matching the Gilbert-Varshamov bound (STOC 2017).  Ta-Shma's original analysis was entirely linear algebraic, and subsequent developments have inherited this viewpoint.  In this work, we rederive Ta-Shma's analysis from a combinatorial point of view using repeated application of the \emph{expander mixing lemma}.  We hope that this alternate perspective will yield a better understanding of Ta-Shma's construction. As an additional application of our techniques, we give an alternate proof of the \emph{expander hitting set lemma}.
\end{abstract}

\newcommand{\calc}{{\mathcal{C}}}

\section{Introduction}
Error correcting codes (ECCs) allow a sender to encode a message so that the receiver can recover the full message even if several codeword bits are lost or flipped during transmission.  ECCs are incredibly useful, both in theory and in practice~\cite{SHAMIR79,STV01,ChenW19} (and many, many more).  Formally, a binary code is a map $\calC: \{0,1\}^{k} \to \{0,1\}^n$ which sends a message $m \in \{0,1\}^k$ to the codeword $\calC(m) \in \{0,1\}^n$. Two important parameters of a code are the \emph{distance} and \emph{rate}, which are respectively measures of the code's quality and efficiency. \emph{Rate} is the ratio $k/n$, the number of message bits per codeword bit while \emph{distance} refers to the minimum fraction of coordinates (in $[n]$)  on which two distinct codewords disagree. One of the holy grails in coding theory is to find the best tradeoff between the distance and rate of a binary code. It is known that codes with optimal distance $\delta =\nicefrac12$ must have exponentially small rate~\cite{PLOT60}.  The Gilbert-Varshamov (GV) bound~\cite{Gil52,Var57} states for any $\delta \in (0,\nicefrac12)$, there exists a code $C_n$ with blocklength $n$ and distance $d$ with rate $1-H(\delta) - o_n(1)$ where $H(\cdot)$ is Shannon's binary entropy function. Unfortunately, this is a probabilistic (or greedy) construction and we do not know of explicit binary codes matching this bound. For distances $\delta$ close to $\nicefrac12$, the GV bound states that there exists a code with distance $\nicefrac{(1-\ep)}{2}$ and rate $\Omega(\ep^2)$. On the other hand, it is known that any code with distance $\nicefrac{(1-\ep)}{2}$ must have rate $\mathcal{O}\bigl(\ep^2\cdot\log(1/\ep)\bigr)$~{\cite{ABNNR92}}. Constructing an explict code matching the GV bound even for these distance parameters is a major open problem. 

A few years ago, in a breakthrough result, Ta-Shma~\cite{TaShma17} described an explicit construction which got very close: he constructed a family of codes $\{C_n\}_n$ with rate $\Omega(\ep^{2+o_{\ep}(1)})$ and distance $\nicefrac{(1-\ep)}{2}$.  The core of his construction is an amplification procedure which increases the distance of the code using certain special types of random walks on expander graphs.  Specifically, Ta-Shma encodes a message $m\in\{0,1\}^k$ as follows.
\begin{enumerate}
    \item Use a ``base code''  $\calC_0:\{0,1\}^k \to \{0,1\}^n$ with a good (but not optimal) rate$/$distance tradeoff, to encode message $m\in\{0,1\}^k$ into a $n$-bit codeword $\calC_0(m)$ which we will equivalently interpret as function $f:[n]\rightarrow\{0,1\}$.
    
    \item Identify the coordinate set $[n]$ with the vertices of an expander graph $A$.\footnote{We abuse notation by refering to $A$ both as the graph and the vertex set.}
    
    \item Let $W\subset A^t=[n]^t$ be a \emph{special} subset of the set of all $t$-length walks in $A$.  Define $g:W\rightarrow\{0,1\}$ by $g(a_1,\dots,a_t)=f(a_1)\oplus\cdots\oplus f(a_t)$, where $\oplus$ is the bit XOR.  Output $g\in\{0,1\}^{|W|}$.
\end{enumerate}
The ingenious component in TaShma's construction is the choice of the subset $W$. 
As we will soon see, choosing $W$ to be the set of all $t$-length walks in $A$ does not yield an optimal distance$/$rate tradeoff.  TaShma, instead, uses a derandomized subset of walks, resulting from taking an \emph{$s$-wide replacement product walk} on $A$.  In the ordinary replacement product, another expander $B$ is chosen with $|B|={\rm deg}(A)$ so that given $a\in A$, each $b\in B$ corresponds to some $a'\in N(a)$.  A $t$-length replacement product walk in $A$ chooses a random $a\sim A$ and a $(t-1)$-length walk $(b_1,\dots,b_{t-1})$ in $B$ and outputs the walk $(a_1,\dots,a_t)$ in $A$ where $a_1=a$ and $a_{i+1}$ is the $b_i$-th neighbor of $a_i$ for $i=1,\dots,t-1$.  Note the set of replacement product walks in $A$ is a proper subset of the set of all walks.  The $s$-wide replacement product is a parametrized version of the ordinary replacement product.  We explain the $s$-wide replacement product in detail in Section~\ref{sec:prelims}.

\subsection{Our Contribution}
In this note, we rederive the analysis of TaShma~\cite{TaShma17} using repeated applications of the \emph{Expander Mixing Lemma}. TaShma's original analysis, as well as subsequent developments, convey a strongly linear algebraic viewpoint. In this writeup, we take the expander mixing lemma as our starting point and proceed from there in a combinatorial fashion.  Thus, we demonstrate that no linear algebra is needed for the analysis of Ta-Shma's code beyond that which is needed to prove the expander mixing lemma.  We would like to be forthcoming and stress that \textbf{our analysis is completely equivalent to Ta-Shma's original analysis}.  So if you are hoping to read about a new code with improved parameters, you should read something else.  This paper is for those researchers who have had difficulty penetrating the intuition behind Ta-Shma's construction.  We believe that this alternate perspective will appeal to a wider audience and make it easier for the scientific community to innovate on Ta-Shma's breakthrough work.

Our proof is the same as the original proof insofar as a random walk on a graph can be modelled both as a random process and as a linear operator.  The original analysis takes the linear operator view, we take the random process view.  In theory, the linear operator view is convenient for quantitatively reason about random walks because it reduces the task to understanding repeated multiplication by a fixed matrix.  However, when analyzing replacement product walks from the linear operator perspective, the adjacency matrices of the outer and inner expander graphs have to be combined using some kind of tensor product.  The situation is worse for the $s-$wide replacement product since then one has to keep track of $s$ different tensor product matrices and the iterated matrix product needs to alternate over these $s$ matrices.  Thus, it seems there are diminishing returns in terms of the simplicity afforded by the linear operator perspective when the set of all random walks is to be derandomized.  By using the random process view, we are able to express the same ideas in a much simpler way.  This, in turn, makes it easier to see what is going on in certain key steps of the argument.



\subsection{Techniques: Expander Mixing Lemma and consequences}
\label{sec:techniques}
\paragraph{Notation.} Throughout this paper, we refer to graphs by their vertex sets, and use $\sim$ to indicate that two vertices are connected with an edge.  So for example, if $A$ is a graph and $a,a'\in A$ are vertices, we write $a\sim a'$ if there is an edge between $a$ and $a'$.  We write $\rw_A^t$ (resp. $\rw_A^t(a)$) for the distribution which outputs a $t-$length random walk in $A$ (resp. a $t-$length random walk in $A$ which begins at $a$). Given two distributions $\mathcal{D}$ and  $\mathcal{D}'$, we will write $\mathcal{D}\equiv \mathcal{D}'$ to denote that they are same.  

\vskip 3mm In order to get a sense for our technique, let us analyze the distance amplification procedure resulting from taking a random walk on an expander.  Typically expander graphs are defined via the second largest eigenvalue of the adjacency matrix of the graph; in this paper we will use the following equivalent definition (similar definitions have been used in other works, \emph{e.g.},~\cite{DK17}).

\begin{defn}
\label{def:expander}
We say that a graph $A$ is a $\lambda-$\emph{expander} if for all $f,g:A\rightarrow\R$, the following holds: \[\Big|\E_{a\sim a'}\bigl[f(a)\cdot g(a')\bigr]-\mu_f\mu_g\Big|\leq\lambda\sigma_f\sigma_g,\] where $\mu_f$ and $\sigma_f$ are the expectation and standard deviation of the random variable $f(a)$ (namely, $\mu_f =\E_a\bigl[f(a)\bigr]$ and $\sigma_f^2+\mu_f^2=\E_a\bigl[f(a)^2\bigr]$, and similarly for $\mu_g$ and $\sigma_g$).
\end{defn}

Now consider the distance amplification framework above instantiated with $A$ being a constant degree, $d-$regular $\lambda-$expander, and $W$ being the set of all $t-$length random walks in $A$.  Note that $|W|=n\cdot d^{t-1}$, and so the rate of the resulting code is $\calO(d^{-t})$.  If $A$ is Ramanujan (\emph{i.e.}, an expander with the best possible relationship between $\lambda$ and $d$) then $\lambda\approx2/\sqrt{d}$ which makes the rate $\calO\bigl((\lambda/2)^{2t}\bigr)$.  Regarding the distance, note that for any $n-$bit string $f:[n]\rightarrow\{0,1\}$, if the fraction of non-zero coordinates is $\frac{1-\ep}{2}$, then $\ep=-\E_{v\sim[n]}\bigl[(-1)^{f(v)}\bigr]$.  For this reason, we show that the amplification framework above decreases \emph{bias}, where \[{\rm Bias}(f):=\Big|\E_{v\sim[n]}\bigl[(-1)^{f(v)}\bigr]\Big|.\]  The claim below shows that when $W$ is the set of all $t-$length walks in $A$, a regular Ramanujan expander graph with expansion $\lambda$, and when ${\rm Bias}(f)\leq\sqrt{\lambda}$, then ${\rm Bias}(g)\leq\frac{1}{2}\cdot(4\lambda)^{t/2}$.  It follows that if the distance of the amplified code is $\frac{1-\ep}{2}$, then the rate is $\Omega(\ep^4\cdot8^{-2t}\bigr)$.  For any constant $\alpha>0$, it is possible to choose parameters so that $\ep^\alpha\leq8^{-2t}$, in which case the rate is $\Omega(\ep^{4+\alpha})$.

\begin{clm}
\label{clm:exprw}
Let $A$ be a regular $\lambda-$expander, $f:A\rightarrow\{0,1\}$ a function of bias $\big|\E_a\bigl[(-1)^{f(a)}\bigr]\big|\leq\sqrt{\lambda}$.  For $k\geq1$, define $h_k:A\rightarrow\R$ as \[h_k(a):=\E_{(a_1,\dots,a_k)\sim\rw_A^k(a)}\Bigl[(-1)^{f(a_1)\oplus\cdots\oplus f(a_k)}\Bigr].\]  Let $\ep_k:=\big|\E_a\bigl[h_k(a)\bigr]\big|$ and $\sigma_k$ be such that $\sigma_k^2+\ep_k^2=\E_a\bigl[h_k(a)^2\bigr]$.  Then for all $k\geq1$: \[\ep_k\leq\frac{1}{2}\cdot(4\lambda)^{k/2};\text{  }\sigma_k\leq\sqrt{\E_a\bigl[h_k(a)^2\bigr]}\leq(4\lambda)^{\frac{k-1}{2}}.\]
\end{clm}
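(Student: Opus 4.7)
The starting point is the recurrence
\begin{equation*}
h_k(a) = (-1)^{f(a)} \cdot g_{k-1}(a), \qquad \text{where } g_{k-1}(a) := \E_{a' \sim a}\bigl[h_{k-1}(a')\bigr],
\end{equation*}
obtained by conditioning on the first vertex $a_1=a$ of the walk and peeling off one step. This expresses both $\E_a[h_k(a)]$ and $\E_a[h_k(a)^2]$ as expectations over a random edge $a\sim a'$ in $A$, which sets things up for direct application of the expander mixing lemma (Definition~\ref{def:expander}).

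For $\ep_k$, I would write $\E_a[h_k(a)] = \E_{a\sim a'}\bigl[(-1)^{f(a)} \cdot h_{k-1}(a')\bigr]$ and apply EML with the functions $(-1)^f$ and $h_{k-1}$. The bias hypothesis gives $|\mu_{(-1)^f}| \leq \sqrt{\lambda}$, and $\sigma_{(-1)^f} \leq 1$ is trivial since $(-1)^f \in \{\pm 1\}$, so EML yields the first recurrence
\begin{equation*}
\ep_k \;\leq\; \sqrt{\lambda}\, \ep_{k-1} \;+\; \lambda\, \sigma_{k-1}.
\end{equation*}

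For $\E_a[h_k(a)^2]$, the key step is a self-bounding use of EML. Since $\bigl((-1)^{f(a)}\bigr)^2 = 1$, we have $\E_a[h_k(a)^2] = \E_a[g_{k-1}(a)^2] = \E_{a\sim a'}\bigl[g_{k-1}(a) \cdot h_{k-1}(a')\bigr]$. Applying EML with $g_{k-1}$ and $h_{k-1}$: by regularity of $A$, $\mu_{g_{k-1}} = \mu_{h_{k-1}}$, so the product of means equals $\ep_{k-1}^2$. The variance identity $\sigma_{g_{k-1}}^2 = \E_a[g_{k-1}(a)^2] - \ep_{k-1}^2 = \E_a[h_k(a)^2] - \ep_{k-1}^2$ turns the EML bound into a self-referential inequality $\sigma_{g_{k-1}}^2 \leq \lambda\, \sigma_{g_{k-1}}\, \sigma_{k-1}$, which gives $\sigma_{g_{k-1}} \leq \lambda\, \sigma_{k-1}$ and hence the second recurrence
\begin{equation*}
\E_a[h_k(a)^2] \;\leq\; \ep_{k-1}^2 \;+\; \lambda^2\, \sigma_{k-1}^2.
\end{equation*}

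A simultaneous induction on $\ep_k$ and $\sqrt{\E_a[h_k(a)^2]}$ then closes the argument. The base case $k=1$ is immediate: $h_1 = (-1)^f$ gives $\ep_1 \leq \sqrt{\lambda} = \tfrac12(4\lambda)^{1/2}$ and $\E_a[h_1(a)^2] = 1 = (4\lambda)^0$. In the inductive step, substituting $\ep_{k-1} \leq \tfrac12(4\lambda)^{(k-1)/2}$ and $\sigma_{k-1} \leq (4\lambda)^{(k-2)/2}$ makes the two terms in the $\ep_k$ recurrence each equal to $\tfrac14(4\lambda)^{k/2}$, summing to the desired $\tfrac12(4\lambda)^{k/2}$; the second-moment recurrence becomes $\tfrac{1+\lambda}{4}(4\lambda)^{k-1} \leq (4\lambda)^{k-1}$ since $\lambda\leq1$. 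The main obstacle is the second-moment step: naive Cauchy--Schwarz on $g_{k-1}(a)^2 \leq \E_{a'\sim a}[h_{k-1}(a')^2]$ together with regularity only produces $\E_a[h_k(a)^2] \leq \E_a[h_{k-1}(a)^2]$, which fails to decrease, so the self-bounding use of EML is essential to extract the $\lambda^2$ factor per step. Morally this is the combinatorial substitute for the linear-algebraic fact that the random walk operator on $A$ has spectral norm $\lambda$ on the orthogonal complement of constants.
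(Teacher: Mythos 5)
Your argument is correct, reaching the same two recurrences as the paper, namely $\ep_k \leq \sqrt{\lambda}\,\ep_{k-1}+\lambda\sigma_{k-1}$ and $\E_a[h_k(a)^2]\leq\ep_{k-1}^2+\lambda^2\sigma_{k-1}^2$, followed by the same closing induction, and your arithmetic checks out. The one place you take a different route is the derivation of the second-moment recurrence. The paper rewrites $\E_a[h_k(a)^2]$ as $\E_{a'\sim_{A^2}a''}\bigl[h_{k-1}(a')h_{k-1}(a'')\bigr]$, observing that drawing $a\sim A$ and then $a',a''\sim N(a)$ yields a uniform edge in the squared graph $A^2$, and then invokes the external fact that $A^2$ is a $\lambda^2$-expander. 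You instead treat the pair $(g_{k-1},h_{k-1})$ directly, apply EML on $A$ itself to obtain $\E_a[h_k(a)^2]\leq\ep_{k-1}^2+\lambda\sigma_{g_{k-1}}\sigma_{k-1}$, and then note that the unknown $\sigma_{g_{k-1}}^2=\E_a[h_k(a)^2]-\ep_{k-1}^2$ appears on both sides, giving the self-bounding inequality $\sigma_{g_{k-1}}^2\leq\lambda\sigma_{g_{k-1}}\sigma_{k-1}$ and hence $\sigma_{g_{k-1}}\leq\lambda\sigma_{k-1}$. This is a bit more self-contained, since you only ever use EML on $A$ and never need to argue that the squared graph is a $\lambda^2$-expander. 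One structural difference worth flagging: the paper does not prove Claim~\ref{clm:exprw} directly but derives it as a special case of the more general Claim~\ref{clm:genrw} (with an arbitrary terminal weight $H$), which it needs later in the proof of Lemma~\ref{lemma:swide}; your argument proves Claim~\ref{clm:exprw} on its own, and would need to be rerun verbatim with the $H$-weighted function to recover the generalization.
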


\noindent We will actually prove the following slight generalization of Claim~\ref{clm:exprw}, which will be more useful in our analysis later on.  Note Claim~\ref{clm:exprw} is recovered from Claim~\ref{clm:genrw} by letting $H$ be the constant function which always outputs $1$, and noting that $\hat\ep_1\leq\sqrt{\lambda}$ and $\hat\sigma_1\leq1$.

\begin{clm}
\label{clm:genrw}
Let $A$ be a regular $\lambda-$expander, $f:A\rightarrow\{0,1\}$ a function of bias $\big|\E_a\bigl[(-1)^{f(a)}\bigr]\big|\leq\sqrt{\lambda}$, and $H:A\rightarrow\R$ any function.  For $k\geq1$, let $\hat h_k:A\rightarrow[0,1]$ be defined by \[\hat h_k(a)=\E_{(a_1,\dots,a_k)\sim\rw^k(a)}\Bigl[(-1)^{f(a_1)\oplus\cdots\oplus f(a_k)}\cdot H(a_k)\Bigr].\]  Let $\hat\ep_k:=\big|\E_a\bigl[\hat h_k(a)\bigr]\big|$ and $\hat\sigma_k$ such that $\hat\sigma_k^2+\hat\ep_k^2=\E_a\bigl[\hat h_k(a)^2\bigr]$.  Then for $k\geq2$, \[ \hat\ep_k\leq2^{k-2}\cdot(\lambda^{\frac{k-1}{2}}\hat\ep_1+\lambda^{\frac{k}{2}}\hat\sigma_1);\text{ and }\hat\sigma_k\leq\sqrt{\E_a\bigl[\hat h_k(a)^2\bigr]}\leq2^{k-2}\cdot(\lambda^{\frac{k-2}{2}}\hat\ep_1+\lambda^{\frac{k-1}{2}}\hat\sigma_1).\]
\end{clm}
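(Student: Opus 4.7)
The plan is to peel off the first step of the random walk and then induct on $k$ using two applications of the expander mixing lemma.  The starting point is the observation that, by conditioning on the first vertex,
\[\hat h_k(a) \;=\; (-1)^{f(a)}\cdot\E_{a'\sim a}\bigl[\hat h_{k-1}(a')\bigr].\]
It is convenient to track $T_k := \sqrt{\E_a[\hat h_k(a)^2]}$ alongside $\hat\ep_k$ during the induction, since $\hat\sigma_k\leq T_k$ automatically yields the second half of the claim.

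Two recurrences would then be derived.  For the first moment, writing $\hat\ep_k = \bigl|\E_{a\sim a'}[(-1)^{f(a)}\hat h_{k-1}(a')]\bigr|$ and applying EML with $F(a)=(-1)^{f(a)}$ (so $|\mu_F|\leq\sqrt{\lambda}$ and $\sigma_F\leq 1$) and $G=\hat h_{k-1}$ gives $\hat\ep_k \leq \sqrt{\lambda}\,\hat\ep_{k-1}+\lambda\,\hat\sigma_{k-1}\leq \sqrt{\lambda}\,\hat\ep_{k-1}+\lambda\,T_{k-1}$.  For the second moment, the naive Cauchy--Schwarz bound $T_k\leq T_{k-1}$ is too lossy since it fails to use expansion.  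The key trick is to write
\[T_k^2 \;=\; \E_a\bigl[K(a)^2\bigr] \;=\; \E_{a\sim a'}\bigl[K(a)\,\hat h_{k-1}(a')\bigr], \qquad K(a):=\E_{a'\sim a}[\hat h_{k-1}(a')].\]
Then $\mu_K=\mu_{\hat h_{k-1}}$ and $\sigma_K^2 = T_k^2-\hat\ep_{k-1}^2$, so EML gives $T_k^2-\hat\ep_{k-1}^2\leq \lambda\,\sigma_K\,\hat\sigma_{k-1}$, which rearranges to $T_k^2\leq \hat\ep_{k-1}^2+\lambda^2\hat\sigma_{k-1}^2$, and hence $T_k \leq \hat\ep_{k-1}+\lambda\,\hat\sigma_{k-1}\leq \hat\ep_{k-1}+\lambda\,T_{k-1}$.

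The remainder is an induction on $k$.  The base case $k=2$ follows by plugging $\hat h_1(a)=(-1)^{f(a)}H(a)$ directly into the two recurrences, noting that $|\mu_{\hat h_1}|=\hat\ep_1$ and $\sigma_{\hat h_1}=\hat\sigma_1$.  For the inductive step, substituting the inductive bounds for $\hat\ep_{k-1}$ and $T_{k-1}$ into each recurrence produces a sum of four monomials in $\hat\ep_1$ and $\hat\sigma_1$; in the $\hat\ep_k$ recurrence these pair up exactly, while in the $T_k$ recurrence one additionally uses $\lambda\leq 1$ to consolidate $\lambda^{(k-1)/2}\leq\lambda^{(k-2)/2}$ (and similarly for the other exponent).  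Either way the prefactor $2^{k-2}=2\cdot 2^{k-3}$ emerges from summing like pairs.

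The main obstacle is the second-moment step: one must resist reaching for Cauchy--Schwarz and instead rewrite $\E_a[K(a)^2]$ as an edge-expectation of $\hat h_{k-1}$ against its own smoothing $K$, which is the form on which EML extracts an extra factor of $\lambda$ (rather than $\sqrt{\lambda}$) on the variance term.  Once that move is in place, the induction is essentially mechanical.
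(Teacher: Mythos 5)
Your proposal is correct and follows essentially the same route as the paper: the same observation $\hat h_k(a)=(-1)^{f(a)}\E_{a'\sim a}[\hat h_{k-1}(a')]$, the same two recurrences, and the same induction. The only cosmetic difference is in the variance step, where the paper notes that $\E_a\bigl[\E_{a'\sim N(a)}[\hat h_{k-1}(a')]^2\bigr]$ is an edge-average over $A^2$ (a $\lambda^2$-expander) and applies the mixing lemma there directly, whereas you apply it on $A$ against the smoothed function $K$ and then bootstrap the resulting quadratic $\sigma_K^2\leq\lambda\sigma_K\hat\sigma_{k-1}$; both yield the identical bound $T_k^2\leq\hat\ep_{k-1}^2+\lambda^2\hat\sigma_{k-1}^2$.
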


\begin{proof}
The key observation is that for $k\geq2$, $\hat h_k(a)=(-1)^{f(a)}\cdot \E_{a'\sim N(a)}\bigl[\hat h_{k-1}(a')\bigr]$.  This lets us bound $\hat\ep_k$ and $\hat\sigma_k$ in terms of $\hat\ep_{k-1}$ and $\hat\sigma_{k-1}$ using the expander mixing lemma (Definition~\ref{def:expander}) as follows:
\begin{center}\begin{minipage}{.9\linewidth}\begin{itemize}
    \item[$\cdot$ $\hat\ep_k=$] $\big|\E_a\bigl[\hat h_k(a)\bigr]\big|=\big|\E_{a\sim a'}\bigl[(-1)^{f(a)}\cdot \hat h_{k-1}(a')\bigr]\big|\leq\sqrt{\lambda}\hat\ep_{k-1}+\lambda\hat\sigma_{k-1}$;
   \item[$\cdot$ $\hat\sigma_k^2\leq$] $\hat\sigma_k^2+\hat\ep_k^2=\E_a\bigl[\hat h_k(a)^2\bigr]=\E_a\Bigl[\E_{a'\sim N(a)}\bigl[\hat h_{k-1}(a')\bigr]^2\Bigr]=\E_{a'\sim_{A^2}a''}\bigl[\hat h_{k-1}(a')\cdot \hat h_{k-1}(a'')\bigr]$
    \item[$\leq$] $\hat\ep_{k-1}^2+\lambda^2\hat\sigma_{k-1}^2$,
\end{itemize}\end{minipage}\end{center}
where $a'\sim_{A^2}a''$ indicates that $(a',a'')$ is a uniform edge in $A^2$ (a $\lambda^2-$expander).  We have used that the distribution which draws $a\sim A$, $a',a''\sim N(a)$ and outputs $(a',a'')$ is identical to the uniform edge distribution on $A^2$.  The claim follows by induction.
\end{proof}

\subsection{Improving the rate via $s$-wide replacement product walks}
The rate of the above code is roughly $\ep^4$, which is too low. In order for it to have rate $\approx\ep^2$, we would have needed $\ep_t\leq\lambda^t$ rather than what we got which was $\ep_t\leq\lambda^{t/2}$ (actually we got something weaker, we are oversimplifying to clarify the discussion).  The recursive formulas which appeared in the proof were:
\begin{itemize}
    \item[$\cdot$] $\ep_k\leq{\rm Bias}(f)\cdot\ep_{k-1}+\lambda\sigma_{k-1}\leq\sqrt{\lambda}\ep_{k-1}+\lambda\sigma_{k-1}$ (we assumed ${\rm Bias}(f)\leq\sqrt{\lambda}$);    
    \item[$\cdot$] $\sigma_k\leq\ep_{k-1}+\lambda\sigma_{k-1}$ (implied by $\sigma_k^2\leq\ep_{k-1}^2+\lambda^2\sigma_{k-1}^2$).
\end{itemize}
The problem here is the bound $\sigma_k\leq\ep_{k-1}+\lambda\sigma_{k-1}$, specifically the $\ep_{k-1}$ term on the right since we are moving from a $k-$th level term to a $(k-1)-$th level term without gaining a factor of $\lambda$.  Plugging this into the first equation gives $\ep_k\leq\sqrt{\lambda}\ep_{k-1}+\lambda\ep_{k-2}+\lambda^2\sigma_{k-2}$, where the first two terms are problematic (we are moving from level $k$ to level $k-1$ and $k-2$ but gaining only one factor of $\sqrt{\lambda}$ and $\lambda$, respectively).  The first problematic term could be fixed by choosing $\lambda$ such that ${\rm Bias}(f)\leq\lambda$; but the second problematic term cannot be easily fixed.  This phenomenon was observed in~\cite{TaShma17} where the problem is summarized by saying ``one out of every two steps works''.

A natural idea for derandomizing $W$ is to work with a set of replacement (or zig-zag) product walks.  Unfortunately this yields no improvement as the ``one out of every two steps works'' problem persists.  Ben-Aroya and Ta-Shma~\cite{BT11} solved this problem in a different context by using an expander graph $B$ on a slightly larger vertex set of size $d^s$ for $s\geq2$, and by analyzing the resulting walk $s$ steps at a time.  This is called the $s$-wide replacement product.  Ta-Shma was then able to successfully argue that ``$s-4$ out of every $s$ steps work''.  When interpreted in our language, this observation translates to a recursive formula like $\ep_k\leq\lambda^{s-4}\cdot\ep_{k-s}$, where we move from a $k-$th level term to a $(k-s)-$th level term, while gaining $(s-4)$ factors of $\lambda$.  Gaining $s$ factors of $\lambda$ would have let us solve to the optimal $\ep_k\leq\lambda^k$, obtaining rate of $\approx\ep^2$; gaining $(s-4)$ factors of $\lambda$ lets us solve instead to $\ep_k\leq\lambda^{k(1-4/s)}$ which is almost as good when $s$ is large.

\section{Preliminaries}
\label{sec:prelims}

\paragraph{Random Walks on Graphs.} Let $A$ be the vertex set of a graph.  Given $a,a'\in A$, we write $a\sim a'$ if $a$ and $a'$ are connected by an edge.  For $a\in A$, let $N(a)\subset A$ denote the \emph{neighborhood} of $A$, \emph{i.e.}, $N(a):=\{a'\in A:a\sim a'\}$.  For an integer $d\geq1$, we say that $A$ is $d-$regular if $|N(a)|=d$ for all $a\in A$.  For an integer $k\geq1$, let \[\rw_A^k:=\{(a_1,\dots,a_k)\in A^k:a_i\sim a_{i+1}\text{ }\forall\text{ }i=1,\dots,k-1\}\] denote the set of $k-$length random walks in $A$.  Similarly, for $a\in A$, $\rw_A^k(a)$ is the set of $k-$length random walks in $A$ which begin at $a$, so $\rw_A^k(a):=\{(a_1,\dots,a_k)\in\rw_A^k:a_1=a\}$.  We will often view $\rw_A^k$ as a distribution, where $(a_1,\dots,a_k)\sim\rw_A^k$ means that $a_1\sim A$ is drawn uniformly and then $a_{i+1}\sim N(a_i)$ is drawn for $i=1,\dots,k-1$.

\paragraph{Expander Graphs.} Graph expansion is usually defined as the second largest eigenvalue of the graph's adjacency matrix,\footnote{The \emph{adjacency matrix} of the graph $A$ is $M\in\{0,1\}^{|A|\times|A|}$, where $M(a,a')=1$ iff $a\sim a'$.} \emph{i.e.}, \begin{equation}\label{eq:eigen}\lambda:=\max_{x,y\perp\boldone}\frac{|\langle x,My\rangle|}{|x||y|},\end{equation} where the max is over all nonzero $x,y\in\R^{|A|}-\{0\}$ which are perpendicular to the all $1$s vector $\boldone$.  Our Definition~\ref{def:expander} can be recovered from (\ref{eq:eigen}) for any $f,g:A\rightarrow\R$ by setting $x,y\in\R^{|A|}$ to be $x_a=f(a)-\mu_f$ and $y_a=g(a)-\mu_g$.

\paragraph{Cayley Graphs.} Given a finite group $G$ and a subset $U\subseteq G$, the Cayley graph ${\rm Cayley}(G,U)$ has vertex set $G$ with $g\sim g'$ iff $g^{-1}g'\in U$.  Note that ${\rm Cayley}(G,U)$ is $|U|-$regular; additionally, if $U$ is closed under inversion, then ${\rm Cayley}(G,U)$ is undirected.  Cayley graphs play a key role in many explicit constructions of expander graphs.  Ta-Shma's original construction used two Cayley graphs as explicit expander constructions.  The first Cayley graph was over $\F_2^k$, and the second was over ${\sf PGL}_2(\F_q)$, the projective general linear group over a large finite field.  The use of this second Cayley graph put restrictions on some of the parameters, which required some care in order to navigate.  Subsequently to Ta-Shma's original paper, new constructions of expanders based on Cayley graphs have been given.  We will use a new construction, due to Alon~\cite{ALON21}, instead of the ${\sf PGL}_2(\F_q)$ construction as it will give us more flexibility.

\begin{thm}
\label{thm:expander}
We have the following expander constructions from~\cite{ALON21} and~\cite{AGHP92}, respectively.
\begin{enumerate}[align=left]
    \item[{\bf The Outer Graph:}] For all integers $n,d\in\N$ there is an explicit construction of a $d-$regular Cayley graph with $n\cdot(1+o_n(1))$ vertices and expansion $\lambda\leq \frac{8}{\sqrt{d}}$.
    
    \item[{\bf The Inner Graph:}] For all integers $r,\ell\in\N$ such that $\ell\leq r/2$, there exists an explicit\footnote{This Cayley graph construction is actually \emph{fully explicit}, in the sense that given any vertex, the $i-$th neighbor can be computed in polylogarithmic time.} construction of an undirected $2^{2\ell}-$regular Cayley graph over $\F_2^r$ which is a $(r-1)2^{-\ell}-$expander.
\end{enumerate}
\end{thm}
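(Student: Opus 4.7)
Both parts of this theorem are direct invocations of external constructions from [ALON21] and [AGHP92], so the plan is not to prove anything from scratch but to unpack those cited results and verify they deliver exactly the parameters stated. No new combinatorial work is required beyond reconciling the source papers' notation with the covariance-style expansion definition of Definition~\ref{def:expander}.

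For the outer graph, I would invoke Alon's near-Ramanujan Cayley construction directly. That work produces, for each integer $d \geq 2$, an explicit $d$-regular Cayley graph over a suitable group whose nontrivial adjacency eigenvalues have absolute value at most $(2 + o_d(1))\sqrt{d-1}/d$, which is comfortably inside the $8/\sqrt{d}$ bound quoted here. The only nontrivial bookkeeping is the vertex count: Alon's families are indexed by certain arithmetic parameters, so to realize $n \cdot (1+o_n(1))$ vertices I would pick the smallest admissible group of order at least $n$ and argue from the density of admissible orders that the multiplicative overhead vanishes as $n \to \infty$.

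For the inner graph, my plan is to use the classical translation between small-bias sets over $\F_2^r$ and Cayley-graph expansion. If $S \subseteq \F_2^r$ is a symmetric $\ep$-biased (multi)set, meaning $\bigl|\E_{s \sim S}[(-1)^{\langle \chi, s \rangle}]\bigr| \leq \ep$ for every nonzero $\chi \in \F_2^r$, then the characters of $\F_2^r$ simultaneously diagonalize the adjacency matrix of ${\rm Cayley}(\F_2^r, S)$; the all-ones vector is the trivial-character eigenvector, and every other eigenvalue is one of these Fourier coefficients and hence at most $\ep$ in absolute value. To produce a set of the required size I would invoke the AGHP construction based on concatenating Reed--Solomon with Hadamard, which yields an explicit $\ep$-biased set over $\F_2^r$ of size at most $\lceil (r-1)/\ep \rceil^2$. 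Setting $\ep = (r-1)\cdot 2^{-\ell}$ then forces the size to be at most $2^{2\ell}$, matching the theorem; the hypothesis $\ell \leq r/2$ is what ensures the parameters stay in the regime where AGHP applies and the stated bias remains nontrivial.

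The only real obstacle is bookkeeping: precisely matching AGHP's parameter regime to the $(2^{2\ell},\,(r-1)2^{-\ell})$ pair requested, and confirming that the covariance form of expansion from Definition~\ref{def:expander} coincides with the usual spectral notion on \emph{both} sides of zero (which follows by applying the definition to $\pm f,\pm g$, since the covariance bound is symmetric in sign and absorbed into $|{\cdot}|$). Once those translations are in place, the theorem is an immediate consequence of the two cited constructions.
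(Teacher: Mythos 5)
The paper does not prove Theorem~\ref{thm:expander}; it is stated purely as a citation to~\cite{ALON21} and~\cite{AGHP92}, so there is no internal proof to compare against. Your unpacking of the two citations is essentially correct, with the right mechanism in each case. For the inner graph, the translation ``$\ep$-biased symmetric (multi)set over $\F_2^r$ $\Leftrightarrow$ abelian Cayley graph whose nontrivial normalized eigenvalues are bounded by $\ep$'' via simultaneous diagonalization by characters is exactly the right route, and the AGHP powering set indexed by pairs in $\F_{2^\ell}\times\F_{2^\ell}$ has size exactly $2^{2\ell}$ with bias $(r-1)2^{-\ell}$ (the degree-$(r-1)$ polynomial has at most $r-1$ roots), matching the stated parameters precisely rather than just up to a ceiling; the undirectedness is automatic since every element of $\F_2^r$ is its own inverse, and your remark that the covariance form of Definition~\ref{def:expander} is sign-symmetric and follows from the spectral bound is also what the paper itself observes in the preliminaries.

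Two cautions worth making explicit. First, for the outer graph, the load-bearing feature is not merely near-Ramanujan expansion of every degree and (approximately) every size, but that the resulting graph is a \emph{Cayley} graph (equivalently, locally invertible): this is what the paper later needs for the ``start the replacement walk in the middle'' argument, and it is the reason~\cite{ALON21} is cited rather than any generic near-Ramanujan family; your write-up treats this as incidental bookkeeping, but it is the crux. Second, your description of realizing $n\cdot(1+o_n(1))$ vertices by ``picking the smallest admissible group of order at least $n$ and arguing density'' undersells what~\cite{ALON21} actually provides: the contribution of that paper is precisely to achieve (roughly) \emph{every} size at \emph{every} degree, so the $(1+o_n(1))$ vertex-count guarantee should be quoted directly from the theorem there rather than re-derived from a density-of-admissible-orders heuristic, which on its own would not obviously give a $1+o(1)$ overhead. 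With those two points made explicit, your reading is a faithful account of what the cited constructions deliver.
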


\paragraph{The Shifted Neighborhood Distribution.} Let $B$ be a Cayley graph on $\F_2^{ms}$, and let $d=2^m$.  For any $b=\bigl(b[1],\dots,b[s]\bigr)\in B\cong[d]^s$, let ${\sf shift}(b)=\bigl(b[2],\dots,b[s],b[1]\bigr)\in B$ be the element obtained by circularly shifting the coordinates of $b$.  Given $b\in B$, the \emph{shifted neighborhood distribution of} $b$, denoted $\tilde N(b)$, draws $u\sim U$ (the generator set of the Cayley graph) and outputs ${\sf shift}(b+u)$ (note $b+u$ is a random neighbor of $b$ in $B$).  It is clear that the expansion of $B$ is not affected by using the shifted neighborhood distribution instead of the original neighborhood distribution.  Indeed, \[\Big|\E_{\stackalign{b&\sim B\\ b'&\sim\tilde N(b)}}\bigl[f(b)\cdot g(b')\bigr]-\mu_f\mu_g\Big|=\Big|\E_{\stackalign{b&\sim B\\ b'&\sim N(b)}}\bigl[f(b)\cdot\tilde g(b')\bigr]-\mu_f\mu_{\tilde g}\Big|\leq\lambda\sigma_f\sigma_{\tilde g}=\lambda\sigma_f\sigma_g,\] where $\tilde g=g\circ{\sf shift}$; clearly $(\mu_{\tilde g},\sigma_{\tilde g})=(\mu_g,\sigma_g)$.  Let $\tilde{\rw}^k_B$ denote the set of $k-$length shifted random walks in $B$.  We prove the following claim about $\tilde{\rw}^k_B$, when $k$ is small.
\begin{clm}
\label{clm:bwalk}
For all $k\leq s$, the distribution that chooses $(b_1,\dots,b_k)\sim\tilde\rw^k_B$ and outputs the tuple $( b_1[1],b_2[1],\dots,b_k[1])\in[d]^k$ is identical to the uniform distribution on $[d]^k$.
\end{clm}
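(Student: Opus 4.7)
The plan is to track, coordinate by coordinate, how the first entry of $b_i$ relates back to the initial vertex $b_1$ and the intervening edge labels. Specifically, I will show by induction on $i$ that for all $i\leq s$, \[b_i[1]=b_1[i]+\Phi_i(u_1,\dots,u_{i-1}),\] where $u_j\in U\subseteq\F_2^{ms}$ is the edge label chosen at step $j$ of the shifted walk, and $\Phi_i$ is some fixed function (independent of $b_1$). The base case $i=1$ is trivial. For the inductive step, the recurrence $b_{i+1}={\sf shift}(b_i+u_i)$ gives $b_{i+1}[j]=b_i[j+1]+u_i[j+1]$ whenever $j<s$; iterating this coordinate-shifting, the entry sitting in position $1$ at step $i$ is the entry originally at position $i$ of $b_1$ plus corrections coming from the shifted edge labels, namely $u_1[i]+u_2[i-1]+\cdots+u_{i-1}[2]$. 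Provided $i\leq s$, none of these reach around the cycle, so the formula is genuinely of the claimed form.

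Given this formula, the uniformity claim drops out from independence. Since $b_1\sim B\cong[d]^s$ is uniform, the coordinates $b_1[1],\dots,b_1[k]$ are mutually independent, each uniform on $[d]$, and independent of $u_1,\dots,u_{k-1}$. The crucial observation is that $b_1[i]$ appears in the expression for $b_i[1]$ but in none of the expressions for $b_1[1],b_2[1],\dots,b_{i-1}[1]$. Therefore, conditioning on $b_1[1],\dots,b_{i-1}[1]$ and on $u_1,\dots,u_{i-1}$, the variable $b_i[1]=b_1[i]+\Phi_i(u_1,\dots,u_{i-1})$ is a fixed translate of $b_1[i]$, which is still uniform on $[d]\cong\F_2^m$ after conditioning. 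Hence $b_i[1]$ is uniform on $[d]$ given the earlier coordinates, and inducting on $i$ from $1$ to $k$ shows the joint distribution of $(b_1[1],\dots,b_k[1])$ is uniform on $[d]^k$.

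The only real obstacle is bookkeeping the indices: one has to be careful that the shift operation sends the coordinate in position $i$ of $b_1$ to position $1$ of $b_i$ after exactly $i-1$ applications, and that the restriction $k\leq s$ is precisely what prevents wraparound, which would otherwise mix $b_1[1]$ back into a later $b_i[1]$ and destroy independence. Once the index calculation is set up cleanly, no further ideas are needed—the statement is a direct consequence of the uniformity of $b_1$ on $[d]^s$ and the injectivity of the map sending $b_1[i]$ into $b_i[1]$.
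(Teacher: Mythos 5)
Your proof is correct and follows essentially the same route as the paper's: both derive the formula $b_i[1]=b_1[i]+\sum_{j<i}u_j[i-j+1]$ (your $\Phi_i$ is exactly this sum) and then conclude uniformity from the fact that the coordinates $b_1[1],\dots,b_1[s]$ of the uniformly chosen $b_1$ are independent and uniform on $[d]$. Your framing via conditioning makes the independence argument a bit more explicit, but the key index computation and the use of the no-wraparound condition $k\leq s$ are identical to the paper's.
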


\begin{proof}
It suffices to prove the claim for $k=s$, since when $k<s$, the distribution $\tilde\rw_B^k$ is identical to the distribution which draws $(b_1,\dots,b_s)\sim\tilde\rw_B^s$ and outputs $(b_1,\dots,b_k)$.  Note that $\tilde\rw_B^s$ draws $u_1,\dots,u_{s-1}\sim U$, $b_1\sim B$ and outputs $(b_1,\dots,b_s)\in B^s$, where $b_i={\sf shift}(b_{i-1}+u_{i-1})$ for $i=2,\dots,s$.  This means that for all $i=1,\dots,s$, $ b_i[1]=b_1[i]+\sum_{j<i}u_j[i-j+1]$ (addition over $\F_2^m$).  Uniformity of $\big( b_1[1],b_2[1],\dots,  b_t[1]\big)$ follows from the uniformity of $b_1=\bigl(b_1[1],\dots,b_1[s]\bigr)\sim[d]^s$.
\end{proof}

\subsection{The $s$-wide Replacement Product}
Let $A$ and $B$ denote, respectively, the outer and inner graphs promised by Theorem~\ref{thm:expander}.  So $A$ is a $d-$regular graph on (roughly) $n$ vertices, while $B$ is a Cayley graph over $\F_2^{ms}$, where $2^m=d$, so that vertices of $B$ are identified with $s-$tuples of elements in $[d]$: $b=\bigl(b[1],\dots,b[s]\bigr)\in[d]^s$.  Given $a\in A$, a vertex $b\in B$ can be identified with an $s-$tuple of neighbors of $a$ since $|N(a)|=d$.  Define the \emph{rotation map} $\phi:A\times B\rightarrow A$ via $\phi(a,b)=a'$ where $a'$ is the $b[1]-$th neighbor of $a$.  Since $\phi$ only depends on the first coordinate of $b$, we write $\phi(a,\hat b)$ where $\hat b$ is shorthand for $b[1]$.  For any $k\geq1$, the $k-$length $s-$wide replacement walk distribution, denoted $\sw^k_{A,B}$ draws $a\sim A$ and $(b_1,\dots,b_{k-1})\sim\tilde{\rw}^{k-1}_B$, and outputs $(a_1,\dots,a_k)\in A^k$ where $a_1=a$ and $a_{i+1}=\phi(a_i,\hat b_i)$ for $i=1,\dots,k-1$.  Since the graphs $A$ and $B$ will be fixed throughout this paper, we write $\sw^k$ rather than $\sw^k_{A,B}$.  Given $a\in A$, the distribution $\sw^k(a)$ outputs a sample from $\sw^k$ conditioned on $a_1=a$.  Likewise, given $(a,b)\in A\times B$, $\sw^k(a,b)$ outputs a sample from $\sw^k$ conditioned on $(a_1,b_1)=(a,b)$.  The $s-$wide replacement walk is shown in Figure~\ref{fig:1}.

\begin{figure}[htbp]
\begin{center}
 
    \includegraphics[scale=0.4]{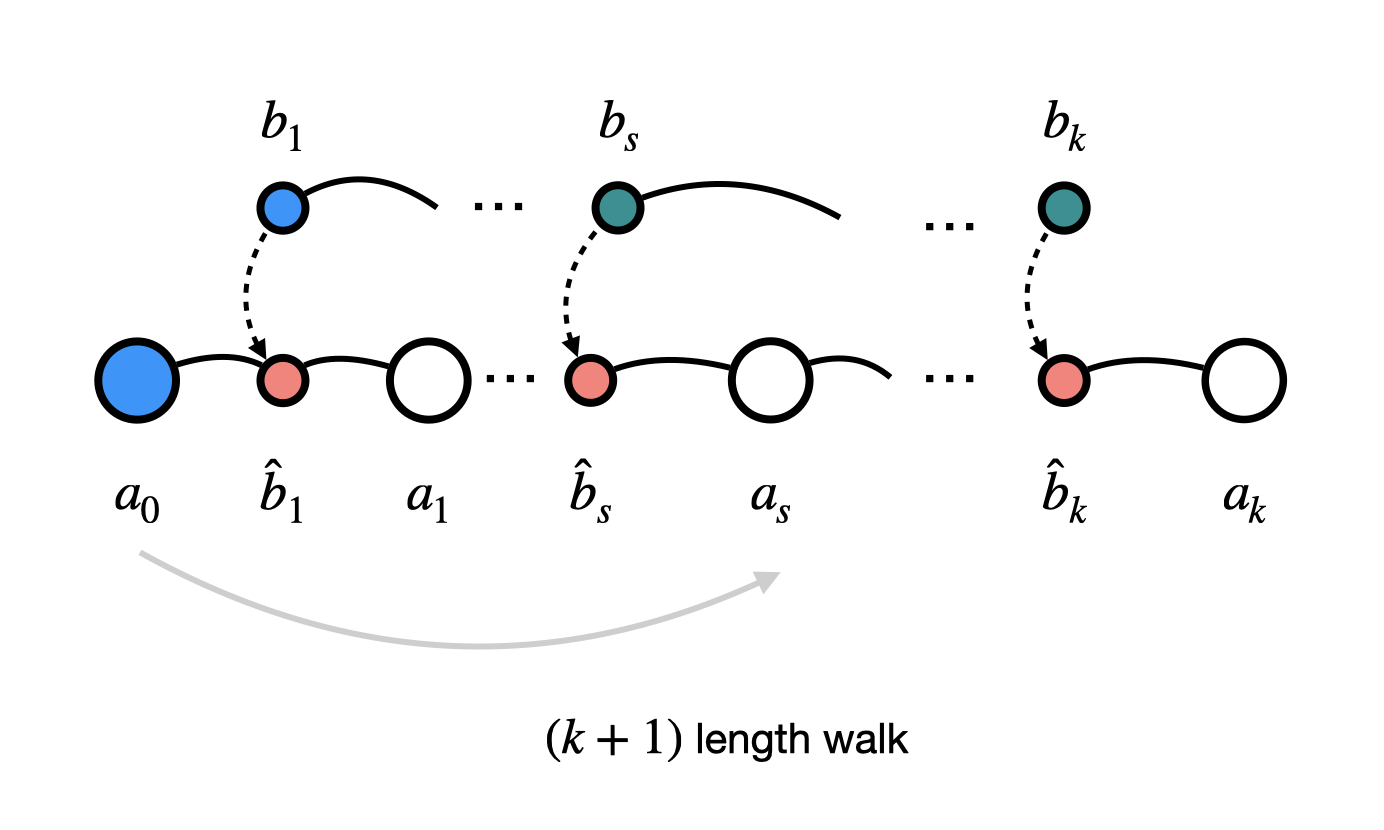}

\caption{Illustration of $s$-wide random walk on $A$ using a graph $B$.}
\label{fig:1}

\end{center}
\end{figure}

\noindent For our graphs $A$ and $B$ (specifically, since $A$ is $d-$regular and $B$ is a Cayley graph over $\F^{ms}_2\cong[d]^s$) the next fact follows immediately from Claim~\ref{clm:bwalk}.

\begin{fact}[{\bf Pseudorandomness}]
\label{fact:pseudorandomness}
For all $k=1,2,\dots,s,s+1$ and all $a\in A$, $\sw^k(a)\equiv\rw^k_A(a)$.
\end{fact}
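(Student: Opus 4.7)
The plan is to reduce the statement directly to Claim~\ref{clm:bwalk}, which is essentially the only nontrivial ingredient needed. Both distributions $\sw^k(a)$ and $\rw^k_A(a)$ produce tuples $(a_1,\dots,a_k)\in A^k$ starting at $a_1=a$, so it suffices to argue that the randomness driving the subsequent $k-1$ steps matches between the two.

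First, I would unpack the definition of $\sw^k(a)$: conditioned on $a_1=a$, the walk is determined entirely by $(\hat b_1,\dots,\hat b_{k-1})=(b_1[1],\dots,b_{k-1}[1])$ through the rotation rule $a_{i+1}=\phi(a_i,\hat b_i)$. Since $A$ is $d$-regular and $\phi(a_i,\cdot)$ enumerates $N(a_i)$ as $\hat b_i$ ranges over $[d]$, the map $(\hat b_1,\dots,\hat b_{k-1})\mapsto(a_1,\dots,a_k)$ is a bijection from $[d]^{k-1}$ onto $\rw^k_A(a)$ which pushes forward the uniform measure on $[d]^{k-1}$ to the uniform measure on $\rw^k_A(a)$. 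Hence $\sw^k(a)\equiv\rw^k_A(a)$ will follow once I know that the tuple $(b_1[1],\dots,b_{k-1}[1])$ induced by $(b_1,\dots,b_{k-1})\sim\tilde\rw^{k-1}_B$ is uniform on $[d]^{k-1}$.

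Second, I would invoke Claim~\ref{clm:bwalk} with its parameter set to $k-1$. That claim requires $k-1\leq s$, which is exactly the range $k\in\{1,\dots,s+1\}$ allowed by the fact. (The edge case $k=1$ is trivial: both distributions simply output the single vertex $a$.) This gives the desired uniformity of $(\hat b_1,\dots,\hat b_{k-1})$, and combining with the bijection above completes the argument.

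There is no real obstacle here — the main content has been packaged into Claim~\ref{clm:bwalk}, which exploits the fact that circularly shifting between steps spreads each freshly drawn generator $u_j\in U$ into a \emph{new} coordinate slot, so that the first coordinates $b_i[1]$ of the first $s$ steps expose mutually independent fresh randomness coming from $b_1[1],\dots,b_1[s]$ (plus independent pieces of the $u_j$'s). The only thing to double-check while writing up is the index bookkeeping: the walk has $k$ vertices but only $k-1$ transitions, so applying Claim~\ref{clm:bwalk} at length $k-1$ correctly matches the $k\leq s+1$ regime stated in the fact.
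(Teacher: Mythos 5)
Your proposal is correct and matches the paper's own reasoning, which simply asserts that the fact ``follows immediately from Claim~\ref{clm:bwalk}'': you invoke Claim~\ref{clm:bwalk} at length $k-1\leq s$ to get uniformity of $(\hat b_1,\dots,\hat b_{k-1})\in[d]^{k-1}$, then use $d$-regularity of $A$ and the bijection $\phi(a_i,\cdot):[d]\to N(a_i)$ to push that uniformity onto $\rw^k_A(a)$. The index bookkeeping ($k$ vertices, $k-1$ transitions, hence $k\leq s+1$) is exactly right, and the $k=1$ edge case is handled correctly.
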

\noindent
Following Ta-Shma's nomenclature, we will refer to the fact above as the \textit{pseudorandomness} property.
This property will play a crucial role in our proofs below as it will allow us to transform a short $s-$wide walk into a pure random walk on $A$, thus eliminating the dependency on the graph $B$.

\paragraph{Local Invertibility.} Since $A$ is undirected, its edge relation is symmetric.  This means that whenever $a,a'\in A$ and $b\in B$ are such that $a'=\phi(a,\hat b)$, there must exist some $\hat b'\in[d]$ such that $a=\phi(a',\hat b')$.  In this case we say that $(\hat b,\hat b')$ are inverses with respect to the $A-$edge $(a,a')$.  Local invertibility in our context means that these inverse relations are independent of the $A$ edges.  So, specifically, for all $\hat b$ there exists $\hat b'$ such that $(\hat b,\hat b')$ are inverses with respect to all $A$ edges.  This means, for example that for all $a\in A$, if you walk to $a'=\phi(a,\hat b)$ and then continue to $a''=\phi(a',\hat b')$, then $a''=a$.  This property is easy to establish in our situation because $A$ is a Cayley graph.

Practically speaking, what this means for us is that $s-$wide replacement walks can be ``started in the middle''.  For standard random walks, the distribution $\rw_A^k$ which outputs $(a_1,\dots,a_k)$ is identical to the distribution which first chooses $a_i\sim A$ randomly, and then draws $(a_i,a_{i+1},\dots,a_k)\sim\rw_A^{k-i+1}(a_i)$ and $(a_i,a_{i-1},\dots,a_1)\sim\rw_A^i(a_i)$, outputting $(a_1,\dots,a_k)$.  This follows from the regularity of $A$.  Likewise, because of local invertibility, the $s-$wide replacement walk distribution $\sw^k$ is identical to the following ``start in the middle'' version which draws $a_i\sim A$ and $b_i\sim B$, then draws $(b_i,\dots,b_{k-1})\sim\tilde{\rw}_B^{k-i}(b_i)$ and $(b_i,\dots,b_1)\sim\tilde{\rw}_B^i(b_i)$ (in this case the shifted neighborhood distribution needs to shift the other way), then sets $a_{j+1}=\phi(a_j,\hat b_j)$ for $j=i,\dots,k-1$ and $a_{j-1}=\phi(a_j,\hat b_j')$ for $j=i,\dots,2$, where $\hat b_j'$ is the inverse of $\hat b_j$; finally $(a_1,\dots,a_k)$ is output.

\section{Main theorem}
\label{sec:mainthm}
\begin{thm}\label{thm:main}
For every $\ep>0$ there exists an explicit linear code $\{\mathcal{C}_k\}_k$ that has distance $\geq \frac{1}{2}-\ep$ and rate $=\Omega( \ep^{2+o(1)}).$
\end{thm}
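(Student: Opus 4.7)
The plan is to follow the three-step template from the introduction. I would start with a constant-rate linear base code $\calC_0:\F_2^k\to\F_2^n$ of bias at most $\sqrt{\lambda_A}$ (obtainable from any good constant-rate code via standard distance amplification, as in Ta-Shma). Identifying $[n]$ with the vertex set of the outer graph $A$ from Theorem~\ref{thm:expander} of degree $d=2^m$ and expansion $\lambda_A\leq 8/\sqrt{d}$, and using the inner graph $B$ over $\F_2^{ms}$ of degree $2^{2\ell}$ and expansion $\lambda_B\leq(ms)2^{-\ell}$, the amplified codeword is $g:W\to\F_2$ with $g(a_1,\ldots,a_t)=f(a_1)\oplus\cdots\oplus f(a_t)$, where $W$ is the set of $t$-step $s$-wide replacement walks and $f=\calC_0(m)$. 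Since $|W|=n\cdot|B|\cdot 2^{2\ell(t-2)}=n\cdot 2^{ms+2\ell(t-2)}$, the rate is $\Theta\bigl(2^{-ms-2\ell(t-2)}\bigr)$.

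The heart of the argument is the $s$-wide analogue of Claim~\ref{clm:exprw}: assuming ${\rm Bias}(f)\leq\lambda_A$, I want to show \[\bigl|\E_{(a_1,\ldots,a_t)\sim\sw^t}\bigl[(-1)^{f(a_1)\oplus\cdots\oplus f(a_t)}\bigr]\bigr|\leq\lambda_A^{t(1-4/s)}.\] To prove this I would define $\hat h_k:A\times B\to\R$ by $\hat h_k(a,b)=\E_{\sw^k(a,b)}\bigl[(-1)^{f(a_1)\oplus\cdots\oplus f(a_k)}\cdot H(a_k,b_k)\bigr]$ for an auxiliary function $H$, and track $\hat\ep_k$ and $\hat\sigma_k$ exactly as in Claim~\ref{clm:genrw}.

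The argument then proceeds block-by-block: partition the $t$-step walk into $t/s$ blocks of $s$ consecutive steps. Within any single block, Fact~\ref{fact:pseudorandomness} says the $s$-wide walk on $A$ is identically distributed to a pure random walk, so Claim~\ref{clm:genrw} applies verbatim and yields contraction of $\hat\ep,\hat\sigma$ by a factor $\lambda_A^{s-O(1)}$ across the block, with the $O(1)$ slack accounting for the Cauchy--Schwarz steps at the two block endpoints. At the junction between consecutive blocks, I would use the expander mixing lemma on $B$ (with parameter $\lambda_B$) to pass the state from one block into the auxiliary function $H$ of the next. Choosing $\ell\approx m/2+\log(ms)$ so that $\lambda_B\leq\lambda_A^{1-o(1)}$, the junction loss amortizes to $o(1)$ factors of $\lambda_A$ per block. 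Iterating over all $t/s$ blocks gives the target bound.

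The main obstacle will be executing the block-transition bookkeeping correctly. Inside each block, Claim~\ref{clm:genrw} must be invoked with an auxiliary $H$ that encodes the bias contribution of the remaining blocks, which forces us to use local invertibility (so that the walk can be ``started in the middle'' of a block per the preliminaries) together with the $B$-side expander mixing lemma, so that the output $(\hat\ep_1,\hat\sigma_1)$ of one block's analysis feeds directly into the next block as input. Getting this accounting right so that each block contributes its full $\lambda_A^{s-O(1)}$ factor without double-counting losses is where the ``$s-4$ out of every $s$ steps work'' phenomenon manifests. Once the claim is established, choosing $t=\Theta\bigl(\log(1/\ep)/(m(1-4/s))\bigr)$ and letting $m,s\to\infty$ slowly with $1/\ep$ (so that $2\ell/m\to 1$) makes the rate $\Omega\bigl(\ep^{2(1+o(1))}\bigr)=\Omega(\ep^{2+o_\ep(1)})$, and the resulting bias bound $\leq\ep$ converts to distance $\geq(1-\ep)/2$, completing the proof.
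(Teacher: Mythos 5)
Your overall architecture matches the paper (base code, outer$/$inner Cayley expanders, $s$-wide walk encoding, an inductive bias-reduction lemma tracking $\hat\ep_k,\hat\sigma_k$, pseudorandomness to reduce within-block analysis to a pure $A$-walk, local invertibility to start in the middle). However, the quantitative heart is wrong in a way that would sink the proof. Claim~\ref{clm:genrw} applied to a pure $s$-step $A$-walk gives $\hat\ep_s \leq 2^{s-2}\bigl(\lambda_A^{(s-1)/2}\hat\ep_1 + \lambda_A^{s/2}\hat\sigma_1\bigr)$, i.e.\ a per-block contraction of roughly $\lambda_A^{s/2}$, \emph{not} $\lambda_A^{s-O(1)}$ as you assert. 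This is exactly the ``one out of every two steps works'' loss, and the $s$-wide product does \emph{not} eliminate it for the outer graph $A$. The paper's actual move is to measure progress against the \emph{inner} expansion $\lambda_B$ after arranging $\lambda_A \leq \lambda_B^2$, so that $\lambda_A^{s/2} \leq \lambda_B^s$, i.e.\ every step works when counted in powers of $\lambda_B$. That is why Lemma~\ref{lemma:swide} is stated as bias $\leq (2\lambda_B)^{t(1-4/s)}$; the $4/s$ overhead is the junction bookkeeping, and the whole point is that the rate depends only on $d_B$, so $\lambda_B$ (which is tied to $d_B$) is the right currency.

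Your parameter choice is consequently backwards and your claimed conclusion is too strong to be true. Choosing $\ell \approx m/2 + \log(ms)$ forces $\lambda_B \approx \lambda_A$ but makes $d_B = 2^{2\ell} \approx d_A(ms)^2 \gg d_A$, so rate $\approx d_B^{-t}$ is tiny while the bias your own sketch actually supports is at best about $\lambda_A^{t/2+t/s}$, far too large relative to the rate. Worse, the target bound $\lambda_A^{t(1-4/s)}$ cannot hold in any parameter regime: since $\lambda_A \leq \lambda_B^2 \approx (\mathrm{poly}\,s)\, d_B^{-1}$, it would give bias $\lesssim d_B^{-t(1-4/s)}$ against rate $\approx d_B^{-t}$, i.e.\ rate $\approx \mathrm{bias}^{1+O(1/s)}$, contradicting the upper bound rate $= O(\ep^2\log(1/\ep))$ for distance $(1-\ep)/2$. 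The paper instead takes $m=s$, $\ell = s/5$ so that $d_B = 2^{2s/5} \ll d_A = 2^s$, $\lambda_A \leq 8\cdot 2^{-s/2} \leq \lambda_B^2 = (s^2-1)^2 2^{-2s/5}$, and $\ep_0 \leq \lambda_B/2$; then $(2\lambda_B)^{2t(1-4/s)}$ lines up with $d_B^{-t}$ up to $s^{O(t)}$ and $(2\lambda_B)^{O(t/s)}$ factors, which is where the $\ep^{2+o(1)}$ rate comes from. You would need to flip the inequality between $\lambda_A$ and $\lambda_B$, restate the target in terms of $\lambda_B$, and carry $\hat\ep_1 = \ep_{k-2}$, $\hat\sigma_1 \leq \sigma_{k-1}$ through Claim~\ref{clm:genrw} at each junction rather than treat the junction loss as $o(1)$.
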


\begin{proof}
Fix $k\in\mathbb{N}$.  The construction of $\mathcal{C}_k$ uses the following building blocks.
\begin{itemize}[align=left]
    \item[$\bullet$ {\bf The Base Code:}] Let $\mathcal{C}_0:\{0,1\}^k\rightarrow\{0,1\}^{n_0}$ be an explicit code of bias $\ep_0$ and rate $R_0$.  We use the construction in~\cite{ABNNR92}, so that $R_0=\mathcal{O}(\ep_0^{-3})$.

    \item[$\bullet$ {\bf The Outer Graph:}] Let $A$ be the $d_A-$regular Cayley graph with expansion $\lambda_A$.  We use the construction of Theorem~\ref{thm:expander}, so that $\lambda_A\leq8/\sqrt{d_A}$ and $|A|=n_0\cdot\bigl(1+o_{n_0}(1)\bigr)$.
    
    \item[$\bullet$ {\bf The Inner Graph:}] Let $B$ be a $d_B-$regular Cayley graph over $\F^r_2$  with  expansion $\lambda_B$.  We use the construction of Theorem~\ref{thm:expander} so that $\lambda_B=(r-1)\cdot2^{-\ell}$ and $d_B=2^{2\ell}$ for integers $\ell,r\in\mathbb{N}$ such that $\ell\leq r/2$.
\end{itemize}

\noindent The building blocks carry several parameters which we now connect.  In order to set up the $s-$wide replacement product, define additional parameters $s,m\in\mathbb{N}$ such that $r=ms$, and let $d_A=2^m$, so $B\simeq [d_A]^s$.  It will be important for our analysis to have $\lambda_A\leq\lambda_B^2$; in order to arrange this, set $m=s$ and $\ell=s/5$.  This gives \[\lambda_A\leq\frac{8}{\sqrt{d_A}}=8\cdot2^{-m/2}=\frac{8}{2^{\ell/2}}\cdot2^{-2\ell}\leq(ms-1)^2\cdot2^{-2\ell}=\lambda_B^2,\] where the final inequality holds whenever $s\geq2$.  We will also require $\ep_0\leq\lambda_B/2$ which we ensure by setting $\ep_0=\frac{s^2-1}{2}\cdot2^{-s/5}$.  At this point, all parameters so far have been defined in terms of $s$; we will specify $s$ later.  Note that our setup allows us to use $B$ to take $s-$wide replacement walks in $A$.  We now describe the code.  Given $x\in\{0,1\}^k$, $\mathcal{C}_k(x)$ is computed as follows.
\begin{itemize}
    \item Compute $\mathcal{C}_0(x)\in\{0,1\}^{n_0}$, and define $f:A\rightarrow\{0,1\}$ by setting \[f(a)=\left\{\begin{array}{cc}\mathcal{C}_0(x)_i, & a=\iota(i)\\ 0, & \text{ otherwise}\end{array}\right.\] where $\iota:[n_0]\hookrightarrow A$ is some fixed embedding.
    
    \item Define $g:\sw^t\rightarrow\{0,1\}$ by setting $g(a_0,\dots,a_t)=f(a_0)\oplus\cdots\oplus f(a_t)$.  Output $g\in\{0,1\}^{\sw^t}$.
\end{itemize}

The rate of $\mathcal{C}_k$ is \[{\sf Rate}_k=\frac{k}{|\sw^t|}\geq\frac{k}{|A|}\cdot\frac{1}{|B|}\cdot \frac{1}{d_B^{t-1}}=\Omega(\ep_0^{-3})\cdot2^{-s^2}\cdot d_B^{-(t-1)}=\Omega\bigl(s^{-6}\cdot2^{-s^2}\bigr)\cdot d_B^{-(t-1)}.\]  To bound the bias of $\mathcal{C}_k$, we use the following lemma which is proved in the next section.

\begin{lemma}[{\bf Bias Reduction of Wide Replacement Product Walks}]
\label{lemma:swide}
Let integers $s,t\in\mathbb{N}$ and graphs $A$ and $B$ be as above; so in particular $A$ and $B$ are $\lambda_A$ and $\lambda_B$ expanders with $\lambda_A\leq\lambda_B^2$.  Let $f:A\rightarrow\{0,1\}$ be any function such that $\big|\E_a\bigl[(-1)^{f(a)}\bigr]\big|\leq\lambda_B$.  Then \[\Big|\E_{(a_0,\dots,a_t)\sim\sw^t}\Bigl[(-1)^{f(a_0)\oplus\cdots\oplus f(a_t)}\Bigr]\Big| \leq(2\lambda_B)^{t(1-4/s)}.\]
\end{lemma}

\noindent Note that the function $f:A\rightarrow\{0,1\}$ defined in the first step of computing $\mathcal{C}_k(x)$ satisfies \[\Big|\E_a\bigl[(-1)^{f(a)}\bigr]\Big|\leq2\cdot\Big|\E_{i\sim[n_0]}\bigl[(-1)^{\mathcal{C}_0(x)_i}\bigr]\Big|\leq2\ep_0\leq\lambda_B,\] and so Lemma~\ref{lemma:swide} ensures that ${\sf Bias}_k\leq(2\lambda_B)^{t(1-4/s)}$.  Putting the calculations of ${\sf Rank}_k$ and ${\sf Bias}_k$ together and using $\lambda_B=(s^2-1)/\sqrt{d_B}$ gives \[{\sf Rate}_k=\Omega\Bigl(s^{-6}\cdot(s^2-1)^{-2t}\cdot2^{-2t-s^2+2s/5}\cdot(2\lambda_B)^{8t/s}\Bigr)\cdot{\sf Bias}_k^2=\Omega\Bigl(s^{-5t}\cdot(2\lambda_B)^{8t/s}\Bigr)\cdot{\sf Bias}^2_k,\] where the right most equality holds whenever $6\log s\leq2s/5$ (implied by $s\geq100$) and $t\geq s^2$.  Note, therefore, that for $\eta\in\bigl(0,1/2\bigr)$, ${\sf Rate}_k=\Omega\bigl({\sf Bias}_k^{2+\eta}\bigr)$ holds whenever $(2\lambda_B)^{t(\eta-4\eta/s-8/s)}\leq s^{-5t}$ which, if $\eta\geq24/s$ is implied by $(2\lambda_B)^{\eta/2}\leq s^{-5}$.  Finally, by plugging in $\lambda_B=(s^2-1)\cdot2^{-s/5}$, we see that this holds whenever $\eta s\geq60\log s$.

So finally, let us prove the theorem.  Suppose that we are given $\ep>0$ and $\eta\in\bigl(0,1/2\bigr)$, and we want to construct $\mathcal{C}_k$ such that ${\sf Bias}_k\leq\ep$ and ${\sf Rate}_k=\Omega\bigl({\sf Bias}^{2+\eta}\bigr)$.  We let $\mathcal{C}_k$ be the construction defined above with $s$ chosen large enough so that $\eta s\geq60\log s$; this ensures ${\sf Rate}_k=\Omega\bigl({\sf Bias}_k^{2+\eta}\bigr)$ as noticed above.  Finally, let us choose $t$ large enough so that $t\geq s^2$ and $(2\lambda_B)^{t(1-4/s)}\leq\ep$; this ensures ${\sf Bias}_k\leq\ep$, as desired.
\end{proof}

\section{Proof of Lemma~\ref{lemma:swide}}
In this section we prove the key bias reduction lemma that was the core of Theorem~\ref{thm:main}.  Our proof will be by induction, just like Claim~\ref{clm:genrw}, so we will need to modify the statement of Lemma~\ref{lemma:swide} so it adheres to an inductive argument.

\subsection{Lemma Statement}
Let $A$ and $B$ be the graphs from Section~\ref{sec:mainthm}.  Write $\lambda$ instead of $\lambda_B$ for the expansion of $B$ and recall that $\lambda_A\leq\lambda^2$.  Let $f:A\rightarrow\{0,1\}$ be a function such that $\big|\E_a\bigl[(-1)^{f(a)}\bigr]\big|\leq\lambda$.  For any $k\geq0$, define $g_k:A\times B\rightarrow\mathbb{R}$ by 
\begin{equation}\label{eq:gk}
g_k(a,b)=\E_{(a_0,\dots,a_k)\sim \sw^k(a,b)}\Bigl[(-1)^{f(a_0)\oplus\cdots\oplus f(a_k)}\Bigr].
\end{equation}
Let $\ep_k=\big|\E_{a,b}\bigl[g_k(a,b)\bigr]\big|$ and let $\sigma_k$ be such that $\sigma_k^2+\ep_k^2=\E_{a,b}\bigl[g_k(a,b)^2\bigr]$.  We prove the following.
\begin{lemma}[{\bf Implies Lemma~\ref{lemma:swide}}]
\label{lemma:swideinduciton}
Assume the above setup.   For all $k\geq0$ \[\ep_k\leq(2\lambda)^{k(1-4/s)};\text{  }\sigma_k\leq (2\lambda)^{(k-2)(1-4/s)}.\]
\end{lemma}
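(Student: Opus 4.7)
The plan is to prove Lemma~\ref{lemma:swideinduciton} by strong induction on $k$, simultaneously tracking $\ep_k$ and $\sigma_k$ in the spirit of Claim~\ref{clm:genrw}. The driving observation is that by Fact~\ref{fact:pseudorandomness}, any window of at most $s+1$ consecutive vertices of an $s$-wide walk is distributed exactly as a pure random walk on $A$; combined with $\lambda_A \leq \lambda^2$, this lets us apply Claim~\ref{clm:genrw} to each such window and automatically convert bounds in $\lambda_A$ into bounds in $\lambda$.

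For the base cases ($k \leq s$), I would use Fact~\ref{fact:pseudorandomness} to reduce the analysis of $\ep_k$ and $\sigma_k$ to analogous moment estimates for a pure random walk on $A$, and then invoke Claim~\ref{clm:genrw} with the constant auxiliary function $H\equiv 1$.  Because ${\sf Bias}(f)\leq\lambda$ and $\sqrt{\lambda_A}\leq\lambda$, this yields bounds of the shape $\ep_k,\sigma_k \lesssim (2\lambda)^{k-O(1)}$, which sit comfortably inside the claimed bounds since $k(1-4/s)\leq k$ and $(k-2)(1-4/s)\leq k-1$.

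For the inductive step ($k > s$), I would set up a chunkwise recursion based on the identity
\[g_k(a,b) \;=\; \E\Bigl[(-1)^{f(a_0)\oplus\cdots\oplus f(a_{s-1})} \cdot g_{k-s}(a_s,b_s)\Bigr],\]
where the expectation is over the first $s$ transitions of the $s$-wide walk from $(a,b)$. After averaging over $(a,b)$, Fact~\ref{fact:pseudorandomness} turns the $A$-prefix $(a_0,\ldots,a_s)$ into a pure walk distributed as $\rw_A^{s+1}$, so the whole expression fits the template of Claim~\ref{clm:genrw} with auxiliary function $H(a_s) := \E_{b_s}\bigl[g_{k-s}(a_s,b_s)\mid a_s\bigr]$. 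Iterating Claim~\ref{clm:genrw}'s one-step recursion $s$ times and using $\lambda_A\leq\lambda^2$ should give a chunkwise bound of the shape $\ep_k,\sigma_k \lesssim (2\lambda)^{s-O(1)}\bigl(\ep_{k-s}+\sigma_{k-s}\bigr)$, which telescopes over $\lceil k/s\rceil$ chunks into the required $(2\lambda)^{k(1-O(1)/s)}$ form.

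The main obstacle I anticipate is that under the first-$s$-steps distribution, the $B$-endpoint $b_s$ is \emph{not} independent of the $A$-prefix $(a_0,\ldots,a_s)$: both quantities depend on the common shift sources driving the walk in $B$. Replacing $g_{k-s}(a_s,b_s)$ by its $a_s$-conditional average $H(a_s)$ therefore leaves a residual whose control is precisely where the $-4$ in the exponent $s-4$ is spent; a constant number of ``buffer'' steps are sacrificed to absorb the $A$/$B$ coupling. To handle this cleanly, I would likely exploit local invertibility to re-root each chunk in the middle and carry slightly enriched moment quantities through the induction that retain both $a$- and $b$-information, so that the residual coupling contributes only a bounded multiplicative loss per chunk and the telescoping calculation above goes through unchanged.
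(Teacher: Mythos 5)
Your high-level strategy matches the paper's: decompose the $s$-wide walk into chunks of $s$ steps, use the pseudorandomness property (Fact~\ref{fact:pseudorandomness}) together with $\lambda_A\leq\lambda_B^2$ to reduce each chunk to a pure random walk on $A$ analyzed by Claim~\ref{clm:genrw}, and carry the pair $(\ep_k,\sigma_k)$ through a strong induction, using local invertibility to re-root at chunk boundaries. That is the right skeleton. However, the proposal is vague exactly where the argument is hard, and the one mechanism you do spell out would not work as stated.

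The identity $g_k(a,b)=\E\bigl[(-1)^{f(a_0)\oplus\cdots\oplus f(a_{s-1})}\cdot g_{k-s}(a_s,b_s)\bigr]$ followed by replacing $g_{k-s}(a_s,b_s)$ with its $a_s$-conditional mean $H(a_s)$ does not isolate a small residual. The residual $g_{k-s}(a_s,b_s)-H(a_s)$ has the same order of magnitude as $\sigma_{k-s}$, which is \emph{not} negligible -- it dominates $\ep_{k-s}$ in the final accounting -- and it remains correlated with the sign factor $(-1)^{\oplus f(a_i)}$ through $b_s$. You cannot pay for it with a few ``buffer steps.'' The mechanism the paper actually uses is different: re-root the walk at step $s$ (via local invertibility) so that the pivot is a uniformly random $A$-vertex $a_s$ together with a uniformly random $B$-edge $(b_s,b_{s+1})$; then the backward chunk is a function of $(a_s,b_s)$ alone and the forward chunk a function of $(a_s,b_{s+1})$ alone, so conditioned on $a_s$ the only coupling is through the $B$-edge, and one application of the expander mixing lemma on $B$ decouples them at cost $\lambda\,\sigma_s(a_s)\sigma_{k-s}(a_s)$. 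That $\lambda\sigma\sigma$ term, not a buffer, is precisely where the loss is absorbed, and Cauchy--Schwarz then folds the $a_s$-conditional variances back into global $\sigma$'s. Your proposal gestures at re-rooting ``to carry enriched moment quantities,'' which is consistent with this, but it never names the expander mixing lemma on $B$ as the decoupling device, and the $H(a_s)$ reduction you do write down is a wrong turn.

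A second missing ingredient is the ``ignore first step'' trick, which is essential for the $\sigma_k$ half of the induction: one shows $\sigma_k^2\leq\E_a[\ep_{k-1}(a)^2]+\lambda^2\sigma_{k-1}^2$ by observing that $g_k(a,b)=(-1)^{f(a)}\cdot\E_{b'\sim N(b)}[g_{k-1}(a',b')]$ and then applying the expander mixing lemma on $B^2$, which is a $\lambda^2$-expander. This step converts a second-moment quantity on $A\times B$ into a first-moment quantity conditioned on $a$, and without it there is no clean route to a recursion for $\sigma_k$. Your outline treats $\ep_k$ and $\sigma_k$ symmetrically (``simultaneously tracking''), but the actual proof treats them asymmetrically: $\ep_k$ is bounded directly by re-rooting at step $s$, while $\sigma_k$ is first reduced via the first-step trick and then requires re-rooting at step $s-1$ and an auxiliary function $G$ whose own variance $\sigma_G$ must be bounded by a nested recursion. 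None of this structure is visible in the proposal.
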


\noindent As mentioned, we prove Lemma~\ref{lemma:swideinduciton} by induction.  The following two claims combine to easily prove Lemma~\ref{lemma:swideinduciton}; we will prove them in Sections~\ref{sec:epk} and~\ref{sec:sigmak}.

\begin{clm}[{\bf Base Case.}]
\label{clm:swidebasecase}
Assume the above setup.  For all $k=0,1,\dots,s$:
\[\ep_k\leq\frac{1}{2}\cdot(2\lambda)^{k+1
};\text{ }\sigma_k\leq2\cdot(2\lambda)^{k-1}.\]
\end{clm}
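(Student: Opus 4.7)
The plan is to prove both inequalities by simultaneous induction on $k \in \{0, 1, \ldots, s\}$, with base cases $k \in \{0, 1\}$ immediate from $g_0, g_1 \in \{\pm 1\}$ (which gives $\E_{a,b}[g_k^2] = 1$, sitting inside the claimed $\sigma_k$ bound) together with one application of the expander mixing lemma on $A$ for $\ep_1$ (using that $(a, \phi(a, b[1]))$ is a uniform edge in $A$ by the regularity argument from Section~\ref{sec:prelims}).

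For the $\ep_k$ bound with $2 \le k \le s$, pseudorandomness (Fact~\ref{fact:pseudorandomness}, applicable because $k \le s$) lets me average out $b$: $\E_b[g_k(a, b)] = h_{k+1}(a)$, where $h_{k+1}(a)$ is the bias of a $(k+1)$-vertex pure random walk in $A$ starting at $a$, in the notation of Claim~\ref{clm:genrw} applied to $A$. So $\ep_k = |\E_a[h_{k+1}(a)]|$, and Claim~\ref{clm:genrw} applied to $A$ (with $H \equiv 1$, $\hat\ep_1 \le \lambda$, $\hat\sigma_1 \le 1$, and $\lambda_A \le \lambda^2$ used to absorb every $\lambda_A^{j/2}$-factor into $\lambda^j$) yields $\ep_k \le \tfrac{1}{2}(2\lambda)^{k+1}$ directly.

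For $\sigma_k$ with $2 \le k \le s$, the plan is to reduce $\E_{a,b}[g_k^2]$ to an application of the expander mixing lemma on the inner graph $B$. From the recursion $g_k(a, b) = (-1)^{f(a)} \E_{b_1 \sim \tilde N(b)}[g_{k-1}(\phi(a, b[1]), b_1)]$, squaring and averaging yields
\[
\E_{a,b}[g_k^2] = \E_{a, b, u_1, u_1'}\bigl[\eta(a_1^*, b_1)\,\eta(a_1^*, b_1')\bigr],
\]
with $\eta(a', b') := (-1)^{f(a')} g_{k-1}(a', b')$, $a_1^* := \phi(a, b[1])$, and $b_1 = {\sf shift}(b + u_1)$, $b_1' = {\sf shift}(b + u_1')$ for $u_1, u_1' \sim U$ independent. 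Because $\phi(\cdot, b[1])$ is a permutation of $A$, the change of variables $a \mapsto a_1^*$ is a bijection, making $a_1^*$ uniform on $A$ and independent of $(b, u_1, u_1')$. For each fixed $a_1^*$, I would Fourier-expand $\eta(a_1^*, \cdot)$ on $B \cong \F_2^r$; using $\chi_\alpha({\sf shift}(x)) = \chi_{\tilde\alpha}(x)$ with $\tilde\alpha := {\sf shift}^{-1}(\alpha)$, a short calculation shows the joint Fourier transform of $(b_1, b_1')$ is diagonally supported, with $\hat P(\alpha, \alpha) = \rho(\tilde\alpha)^2$ where $\rho(\gamma) := \E_{u \sim U}[\chi_\gamma(u)]$, and $\hat P(\alpha, \alpha') = 0$ for $\alpha \ne \alpha'$.

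The expander mixing lemma on $B$ gives $|\rho(\gamma)| \le \lambda$ for $\gamma \ne 0$, so by Parseval $\E_{b_1, b_1'}[\eta(a_1^*, b_1)\eta(a_1^*, b_1')] \le \hat\eta_{a_1^*}(0)^2 + \lambda^2 \E_b[g_{k-1}(a_1^*, b)^2]$, and pseudorandomness identifies $\hat\eta_{a_1^*}(0) = \pm h_k(a_1^*)$ (valid since $k - 1 \le s$). Averaging over $a_1^* \sim A$ gives $\E_{a,b}[g_k^2] \le \E_a[h_k(a)^2] + \lambda^2 \E_{a, b}[g_{k-1}^2]$, and combining Claim~\ref{clm:genrw}'s bound on $\E_a[h_k(a)^2]$ with the inductive hypothesis on $\E_{a, b}[g_{k-1}^2] = \sigma_{k-1}^2 + \ep_{k-1}^2$ closes at $\E_{a,b}[g_k^2] \le 4(2\lambda)^{2(k-1)}$, hence $\sigma_k \le 2(2\lambda)^{k-1}$. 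The main obstacle I anticipate is correctly identifying the diagonal Fourier structure of $(b_1, b_1')$---particularly tracking the ${\sf shift}$ action on characters---since this is precisely where the expander mixing lemma on $B$ must enter and where the analysis must go beyond the purely $A$-based argument of Claim~\ref{clm:genrw}.
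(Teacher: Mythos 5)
Your $\ep_k$ argument is identical to the paper's: pseudorandomness collapses $\ep_k(a)=\E_b[g_k(a,b)]$ to $h_{k+1}(a)$, and then Claim~\ref{clm:genrw} (with $H\equiv1$, $\hat\ep_1\leq\lambda$, $\hat\sigma_1\leq1$, $\lambda_A\leq\lambda^2$) gives $\frac12(2\lambda)^{k+1}$. For $\sigma_k$, however, you take a genuinely different route to the same key inequality. The paper invokes the ``ignore first step trick'' from Section~\ref{sec:intuition}: it writes $\E_{a,b}[g_k^2]=\E_{a'\sim A,\,b\sim_{B^2}b'}[g_{k-1}(a',b)g_{k-1}(a',b')]$ (using that $a\mapsto\phi(a,\hat b)$ is a bijection) and then applies the expander mixing lemma on $B^2$ as a black box to get $\E_{a,b}[g_k^2]\leq\E_a[\ep_{k-1}(a)^2]+\lambda^2\sigma_{k-1}^2$. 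You instead unwind that black box: you Fourier-expand on $B\cong\F_2^r$, track the $\mathsf{shift}$ action on characters (correctly, via $\chi_\alpha\circ\mathsf{shift}=\chi_{\mathsf{shift}^{-1}(\alpha)}$), observe the resulting diagonal support with Fourier weights $\rho(\tilde\alpha)^2$, and bound the nonzero modes by $\lambda^2$. This is precisely the proof of the expander mixing lemma specialized to Cayley graphs over $\F_2^r$, so the two arguments are mathematically the same; yours is more concrete and reveals why $B^2$ and the two-step character sum $\rho(\tilde\alpha)^2$ appear, whereas the paper's is more modular (it would work for any expander $B$, not just an $\F_2$-Cayley graph). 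One small quantitative difference: you bound $\sum_{\alpha\neq0}\hat\eta_{a^*}(\alpha)^2$ by $\E_b[g_{k-1}(a^*,b)^2]$ rather than by $\E_b[g_{k-1}(a^*,b)^2]-\hat\eta_{a^*}(0)^2$, which replaces the paper's $\lambda^2\sigma_{k-1}^2$ term by $\lambda^2(\sigma_{k-1}^2+\ep_{k-1}^2)$; since $\ep_{k-1}^2$ is a factor $\Theta(\lambda^4)$ smaller than $\sigma_{k-1}^2$ this is harmless and your recursion $\E_{a,b}[g_k^2]\leq\E_a[h_k(a)^2]+\lambda^2\E_{a,b}[g_{k-1}^2]$ still closes to $\sigma_k\leq2(2\lambda)^{k-1}$, but keeping the sharper Parseval step would recover the paper's inequality exactly (since $\ep_{k-1}^2\leq\E_a[\ep_{k-1}(a)^2]=\E_a[h_k(a)^2]$ by Jensen).
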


\begin{clm}[{\bf Induction Step.}]
\label{clm:swideinduction}
Assume the above setup.  For all $k>s$:
\begin{center}\begin{minipage}{.9\linewidth}\begin{itemize}
    \item[$\cdot$ $\ep_k\leq$] $\frac{1}{2}(2\lambda)^s(\ep_{k-s}+3\sigma_{k-s})$;
    
    \item[$\cdot$ $\sigma_k^2\leq$]
    $\frac{1}{2}(2\lambda)^{s-2}(\ep_{k-2}+\lambda\sigma_{k-1})\bigl(\ep_{k-s}+(2+\lambda)\sigma_{k-s}\bigr)+\lambda^s\sigma_{k-s}\sigma_{k-1}+\lambda^2\sigma_{k-1}^2$
\end{itemize}\end{minipage}\end{center}
\end{clm}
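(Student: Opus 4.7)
The plan is to prove both bounds by a common recipe: iterate the one-step recurrence $g_i(a,b)=(-1)^{f(a)}\E_u[g_{i-1}(a',b')]$ for $s$ steps (equivalently, pivot the walk at an intermediate state by start-in-the-middle), invoke Fact~\ref{fact:pseudorandomness} to replace the resulting length-$(s{+}1)$ sub-walk by a pure random walk in $A$, and finally apply Claim~\ref{clm:genrw} to harvest the expansion of $A$ (using $\lambda_A\le\lambda^2$). The main obstacle throughout is reconciling pseudorandomness (which requires $b$ to be marginalized) with the conditional independence of the split (which requires $b$ to be fixed); the reconciling device is the decomposition $g_{k-s}(a_s,b_s)=G(a_s)+\Delta(a_s,b_s)$ with $G(a_s):=\E_{b_s}[g_{k-s}(a_s,b_s)]$, so that $\Delta$ has zero conditional mean given $a_s$.

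For the $\ep_k$ bound, iterating the recurrence $s$ times yields
\[
\E_{a_0,b_0}[g_k]=\E\bigl[(-1)^{f(a_0)\oplus\cdots\oplus f(a_{s-1})}\,g_{k-s}(a_s,b_s)\bigr],
\]
and by Fact~\ref{fact:pseudorandomness} the $a$-marginal of $(a_0,\dots,a_s)$ is a pure random walk in $A$. The mean piece $\E[(-1)^{\sum_{i<s} f(a_i)}\,G(a_s)]$ is thus an expectation over a pure random walk of $s{+}1$ vertices and fits Claim~\ref{clm:genrw} with $k=s{+}1$ and $H(a):=G(a)(-1)^{f(a)}$: one checks $\hat\ep_1=\ep_{k-s}$ and $\hat\sigma_1\le\sigma_{k-s}$, so the claim returns a contribution of at most $\tfrac12(2\lambda)^s(\ep_{k-s}+\lambda\sigma_{k-s})$. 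The residual piece is handled by conditioning on $(a_s,b_s)$ and rewriting it as $\E[(\tilde L-\bar L)\,\Delta]$, where $\tilde L(a_s,b_s):=\E[(-1)^{\sum_{i<s} f(a_i)}\mid a_s,b_s]$ and $\bar L(a_s):=\E_{b_s}[\tilde L(a_s,b_s)]$; Cauchy--Schwarz together with $\sqrt{\E[\Delta^2]}\le\sigma_{k-s}$ and a second-moment bound on the $b_s$-fluctuation of $\tilde L$ (via a two-walk variance estimate extracted from Claim~\ref{clm:genrw}) supplies the remaining $O((2\lambda)^s\sigma_{k-s})$ contributions needed to reach $\tfrac12(2\lambda)^s(\ep_{k-s}+3\sigma_{k-s})$.

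For the $\sigma_k^2$ bound, expand $\E[g_k^2]$ as an expectation over two independent copies of the walk sharing the initial state $(a_0,b_0)$. Since $a_1=\phi(a_0,b_0[1])$ is deterministic in $(a_0,b_0)$, the copies coincide through $a_1$ (so $f(a_0)$ and $f(a_1)$ cancel in the two XORs), and diverge only from $a_2$ onward using independent $u$-randomness. Applying the expander mixing lemma on $B$ at this first branching (where the two-step walk in $B$ has expansion $\lambda^2$) peels off the $\lambda^2\sigma_{k-1}^2$ summand (Term~3). Pivoting the remaining two-walk system at step $s$ and repeating the mean/residual decomposition, the two copies contribute the factors $(\ep_{k-2}+\lambda\sigma_{k-1})$ (short-range, with index $k{-}2$ because two $f$-terms have already canceled) and $(\ep_{k-s}+(2+\lambda)\sigma_{k-s})$ (long-range), which combine into Term~1 with prefactor $\tfrac12(2\lambda)^{s-2}$; the cross-interaction between the two copies after the long-range pivot supplies the $\lambda^s\sigma_{k-s}\sigma_{k-1}$ cross-term (Term~2). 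The hardest bookkeeping here is that two coupled walks undergo analogous decompositions simultaneously, and one has to carefully accumulate, rather than discard, the cross-residual terms to recover the precise combination stated in the claim.
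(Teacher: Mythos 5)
Your high-level plan matches the paper's: pivot at step $s$, use pseudorandomness to convert the short subwalk into a pure random walk on $A$, invoke Claim~\ref{clm:genrw}, and separate out a residual term. Indeed, your mean/residual decomposition $g_{k-s}(a_s,b_s)=\ep_{k-s}(a_s)+\Delta(a_s,b_s)$ is the same idea as the paper's expander mixing lemma on $B$, and your ignore-first-step observation for $\sigma_k^2$ is exactly the paper's. However, there are two significant gaps in how the residuals are handled.

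First, for the $\ep_k$ residual. You bound $\E\bigl[(\tilde L-\bar L)\Delta\bigr]$ by Cauchy--Schwarz, $\sqrt{\E[(\tilde L-\bar L)^2]}\cdot\sqrt{\E[\Delta^2]}$, with $\sqrt{\E[\Delta^2]}\le\sigma_{k-s}$. But you then assert $\sqrt{\E[(\tilde L-\bar L)^2]}$ is small enough ``via a two-walk variance estimate extracted from Claim~\ref{clm:genrw}.'' This is the crux, and it does not follow from Claim~\ref{clm:genrw}: that claim quantifies the expansion of $A$, whereas the $b_s$-fluctuation of $\tilde L$ is a statement about the expansion of $B$. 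Unconditionally, the only bound Claim~\ref{clm:genrw} (or the base case) gives you is $\sqrt{\E[(\tilde L-\bar L)^2]}\le\sigma_s$, which yields a residual of order $\sigma_s\sigma_{k-s}\approx(2\lambda)^{s-1}\sigma_{k-s}$ --- off by a factor of $1/(2\lambda)$ from the target $\tfrac32(2\lambda)^s\sigma_{k-s}$. The missing ingredient is the observation that $\tilde L(a_s,\cdot)$ is the result of averaging a ``raw'' history function over one step of the (shifted, reversed) $B$-walk, so that its centered variance contracts by $\lambda^2$; equivalently, the paper formulates the pivot so that the history ends at $b_s$ and the future begins at $b_{s+1}$ with $b_s\sim b_{s+1}$ an edge in $B$, and then the expander mixing lemma on $B$ directly supplies the $\lambda\sigma_s\sigma_{k-s}$ cross-term. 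Your formulation conditions on a single $(a_s,b_s)$ shared by past and future, which buries the $B$-edge; that is recoverable, but it must be done explicitly, and the contraction comes from $B$, not $A$.

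Second, for the $\sigma_k^2$ bound, your account stops at the level of intentions (``the two copies contribute the factors \dots which combine into Term~1''; ``carefully accumulate \dots the cross-residual terms''). The paper's actual handling of $\E_a[\ep_{k-1}(a)^2]$ is genuinely asymmetric: it pivots and packages one of the two $\ep_{k-1}$ factors as a weight $H(a)=\ep_{k-1}(a)$ inside a backward walk, producing the auxiliary function $G_r$, and then has to run a \emph{separate} sub-recursion on the standard deviations $\tau_r$ of $G_r$ (bounding $\tau_{s-1}\le 2(2\lambda)^{s-3}(\ep_{k-2}+\lambda\sigma_{k-1})+\lambda^{s-1}\sigma_{k-1}$) before it can assemble Terms~1 and 2. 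That sub-recursion is where the coefficient $(2+\lambda)\sigma_{k-s}$ and the cross-term $\lambda^s\sigma_{k-s}\sigma_{k-1}$ actually come from. Your symmetric ``two coupled walks'' picture may be expressible, but as written there is no argument that produces these specific terms, and the hard work --- the $\tau_r$ recursion, or some equivalent --- is absent.
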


\begin{proof}[Proof of Lemma~\ref{lemma:swideinduciton}] Claim~\ref{clm:swidebasecase} clearly establishes the base cases since $\frac{1}{2}\cdot(2\lambda)^{k+1}\leq(2\lambda)^{k(1-4/s)}$ and $2\cdot(2\lambda)^{k-1}\leq(2\lambda)^{(k-2)(1-4/s)}$.  For the first part of the induction step, we have \begin{eqnarray*}\ep_k &\leq& \frac{1}{2}\cdot(2\lambda)^s\cdot(\ep_{k-s}+3\sigma_{k-s})\leq\frac{1}{2}\cdot(2\lambda)^s\cdot\Bigl[(2\lambda)^{(k-s)(1-4/s)}+3\cdot(2\lambda)^{(k-s-2)(1-4/s)}\Bigr]\\ &=& 8\lambda^4\cdot\Bigl[(2\lambda)^{k(1-4/s)}+3\cdot(2\lambda)^{(k-2)(1-4/s)}\Bigr]\leq2\lambda^2(4\lambda^2+3)\cdot(2\lambda)^{k(1-4/s)}\leq(2\lambda)^{k(1-4/s)}.\end{eqnarray*} The bound $2\lambda^2(4\lambda^2+3)\leq1$ holds because $\lambda\leq1/3$.  The second part of the induction step is similar:
\begin{eqnarray*}
\sigma_k^2
    &\leq&
\frac{1}{2}\cdot(2\lambda)^{s-2}\cdot(\ep_{k-2}+\lambda\sigma_{k-1})\bigl(\ep_{k-s}+(2+\lambda)\sigma_{k-s}\bigr)+\lambda^s\sigma_{k-s}\sigma_{k-1}+\lambda^2\sigma_{k-1}^2\\
    &\leq&
\frac{1}{2}\cdot(2\lambda)^2\cdot\Bigl[(2\lambda)^{(k-2)(1-4/s)}+\lambda(2\lambda)^{(k-3)(1-4/s)}\Bigr]\cdot\Bigl[(2\lambda)^{k(1-4/s)}+(2+\lambda)(2\lambda)^{(k-2)(1-4/s)}\Bigr]+\\
    &+&
\lambda^s(2\lambda)^{(k-s-2)(1-4/s)}(2\lambda)^{(k-3)(1-4/s)}+\lambda^2(2\lambda)^{2(k-3)(1-4/s)}\\
    &=&
2\lambda^2(2\lambda)^{(2k-2)(1-4/s)}+2\lambda^3(2\lambda)^{(2k-3)(1-4/s)}+(4\lambda^2+2\lambda^3)(2\lambda)^{(2k-4)(1-4/s)}+\\
    &+&
(4\lambda^3+2\lambda^4)(2\lambda)^{(2k-5)(1-4/s)}+2^{4-s}\lambda^4(2\lambda)^{(2k-5)(1-4/s)}+\lambda^2(2\lambda)^{(2k-6)(1-4/s)}\\
    &\leq&
\biggl[2\lambda^2+2\lambda^3+(4\lambda^2+2\lambda^3)+(2\lambda^2+\lambda^3)+2^{3-s}\lambda^3+\frac{1}{4}\biggr]\cdot(2\lambda)^{(2k-4)(1-4/s)}\leq(2\lambda)^{(2k-4)(1-4/s)},\end{eqnarray*}
where the last bound has used $8\lambda^2+6\lambda^3\leq3/4$ which holds because $\lambda\leq1/4$.
\end{proof}

\subsection{Key Intuition}
\label{sec:intuition}
In this section we zoom in on some of the key steps in the coming proofs in order to give extra explanations and intuitions.

\paragraph{$s-$wide Replacement Product Walks in $A$.} Recall that a random $s-$wide replacement product walk in $A$ (\emph{i.e.}, a random sample from $\sw^k$) is produced as follows:
\begin{enumerate}
    \item choose base points $(a,b)\sim A\times B$;
    \item generate $(b_1,\dots,b_k)\in B^k$ as follows:
    \begin{itemize}
        \item[$(i)$] set $b_1=b$;
        \item[$(ii)$] for $i\geq2$, draw $b_i\sim N(b_{i-1})$ and set $b_i={\sf shift}(b_i)$, where ${\sf shift}$ cycles the coordinates of an element of $B\simeq[d]^s$, so ${\sf shift}\bigl(b_i[1],\dots,b_i[s]\bigr)=\bigl(b_i[2],\dots,b_i[s],b_i[1]\bigr)$.
    \end{itemize}
    \item generate and output $(a_0,\dots,a_k)\in A^{k+1}$ as follows:
    \begin{itemize}
        \item[$(i)$] set $a_0=a$;
        \item[$(ii)$] for $i\geq1$, set $a_i=\phi(a_{i-1},\hat b_i)$ where $\hat b_i=b_i[1]\in[d]$ denotes the first coordinate of $b_i\in[d]^s$, and where $\phi$ is the rotation map of $A$.
    \end{itemize}
\end{enumerate}

\paragraph{Pseudorandomness.} As mentioned in Section~\ref{sec:prelims}, when $k\leq s$ the distributions $\sw^k$ and $\rw_A^{k+1}$ are identical.  That is, a random $k-$step $s-$wide replacement product walk in $A$ is just a random $(k+1)-$step random walk in $A$.  The following is an example of how this concept manifests itself in the next section.  Let $\ep_k(a)=\E_b\bigl[g_k(a,b)\bigr]$. \[\ep_k(a)=\E_{(a_0,\dots,a_k)\sim\sw^k(a)}\Bigl[(-1)^{f(a_0)\oplus\cdots\oplus f(a_k)}\Bigr]=\E_{(a_0,\dots,a_k)\sim\rw^{k+1}_A}\Bigl[(-1)^{f(a_0)\oplus\cdots\oplus f(a_k)}\Bigr]=h_{k+1}(a),\] whenever $k\leq s$, where $h_{k+1}$ is the function defined and analyzed in Claim~\ref{clm:exprw}.

\paragraph{The Ignore First Step Trick.} This refers to a key step in the proof that for all $k\geq1$, \begin{equation}\label{eq:firststeptrick}\sigma_k^2\leq\E_a\bigl[\ep_{k-1}(a)^2\bigr]+\lambda^2\sigma_{k-1}^2.\end{equation}  This bound is useful as it reduces the task of bounding $\sigma_k^2$ to the task of bounding $\E_a\bigl[\ep_{k-1}(a)^2\bigr]$, which will turn out to be much easier.  The proof of (\ref{eq:firststeptrick}) requires other ideas as well.  Recall from the previous paragraph the definition of $\ep_k(a)$; additionally let $\sigma_k(a)$ be such that $\sigma_k(a)^2+\ep_k(a)^2=\E_b\bigl[g_k(a,b)^2\bigr]$. \begin{eqnarray*} \sigma_k^2 &\leq& \sigma_k^2+\ep_k^2=\E_{a,b}\bigl[g_k(a,b)^2\bigr]=\E_{a,b}\Bigl[\E_{b'\sim N(b)}\bigl[g_{k-1}(a',b')\bigr]^2\Bigr]=\E_{\stackalign{&a\sim A\\b&\sim_{B^2}b'}}\bigl[g_{k-1}(a,b)\cdot g_{k-1}(a,b')\bigr]\\ &\leq&\E_a\bigl[\ep_{k-1}(a)^2\bigr]+\lambda^2\E_a\bigl[\sigma_{k-1}(a)^2\bigr]\leq\E_a\bigl[\ep_{k-1}(a)^2\bigr]+\lambda^2\sigma_{k-1}^2.\end{eqnarray*}  The second equation on the first line holds because $g_k(a,b)=(-1)^{f(a)}\cdot\E_{b'\sim N(b)}\bigl[g_{k-1}(a',b')\bigr]$, where $a'=\phi(a,\hat b)$; the first inequality on the second line follows from the expander mixing lemma (Definition~\ref{def:expander}) on $B^2$ (a $\lambda^2-$expander); the final inequality has used $\E_a\bigl[\sigma_{k-1}(a)^2\bigr]\leq\sigma_{k-1}^2$ which holds because \[\E_a\bigl[\sigma_{k-1}(a)^2+\ep_{k-1}(a)^2\bigr]=\E_{a,b}\bigl[g_{k-1}(a,b)^2\bigr]=\sigma_{k-1}^2+\ep_{k-1}^2,\] and $\ep_{k-1}^2\leq\E_a\bigl[\ep_{k-1}(a)^2\bigr]$ (Jensen's inequality).  The ignore first step trick is the reasoning behind the final equation on the first line.  The observation is that the distribution which draws $(a,b)\sim A\times B$ and $b',b''\sim N(b)$ and outputs $(a',b',b'')$ where $a'=\phi(a,\hat b)$ is identical to the distribution which draws $a'\sim A$ and a random edge $b'\sim_{B^2}b''$ in $B^2$ and outputs $(a',b',b'')$.  See Figure~\ref{fig:2} for intuition.  

\begin{figure}[h]
\begin{center}

    \includegraphics[scale=0.3]{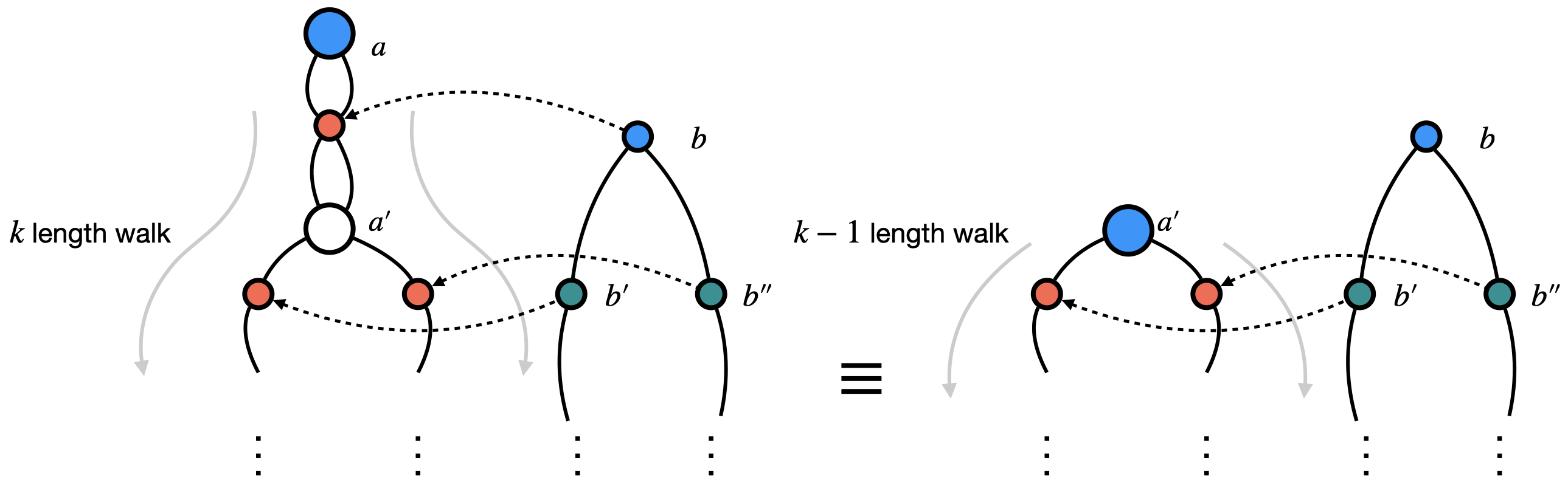}

\caption{``Ignore first step'' trick.}
\vspace{-1.5 em}
 \label{fig:2}

\end{center}
\end{figure}

\paragraph{Starting the Replacement Walk in the Middle.} A useful feature of random walks on an undirected $d-$regular graph is that the steps can be generated out of order.  Specifically, the vertices in a $k-$step random walk can be generated by choosing $a_i\sim A$ first for any $i\in[k]$ and then drawing two walks $(a_i,a_{i+1},\dots,a_k)\sim\rw_A^{k-i+1}(a_i)$, $(a_i,a_{i-1},\dots,a_1)\sim\rw_A^i(a_i)$ and outputting $(a_1,\dots,a_k)$.  Replacement product walks also have this feature, though correctly formulating it requires precision.  We will use that the following distribution is identical to $\sw^k$ for any $i\in\{0,1\dots,k-1\}$:
\begin{enumerate}
    \item $a_i\sim A$ and a random edge $b_i\sim b_{i+1}$ in $B$; set $b_{i+1}={\sf shift}(b_{i+1})$;
    \item generate $(b_1,\dots,b_k)\in B^k$ as follows:
    \begin{itemize}
        \item[$(i)$] for $j\geq i+2$, draw $b_j\sim N(b_{j-1})$ and set $b_j={\sf shift}(b_j)$;
        \item[$(ii)$] for $j\leq i-1$, draw $b_j\sim N(b_{j+1})$ and set $b_j={\sf shift}^{-1}(b_j)$;
    \end{itemize}
    \item generate and output $(a_0,\dots,a_k)\in A^{k+1}$ as follows:
    \begin{itemize}
        \item[$(i)$] for $i\geq i+1$, set $a_i=\phi(a_{i-1},\hat b_i)$ where $\hat b_i=b_i[1]\in[d]$ denotes the first coordinate of $b_i\in[d]^s$, and where $\phi$ is the rotation map of $A$;
        \item[$(ii)$] for $j\leq i-1$, set $a_j=\phi^{-1}(a_{j+1},\hat b_j)$ where $\phi^{-1}(a,\hat b)=\phi(a,\hat b')$ where $\hat b'$ is the local inverse of $\hat b$.
    \end{itemize}
\end{enumerate}
An example of how this is used is the first step of the bound for $\ep_k$ when $k>s$:

\begin{eqnarray*}\ep_k &=& \bigg|\E_{(a_0,\dots,a_k)\sim\sw^k}\Bigl[(-1)^{f(a_s)}\cdot(-1)^{f(a_0)\oplus\cdots\oplus f(a_s)}\cdot(-1)^{f(a_s)\oplus\cdots\oplus f(a_k)}\Bigr]\bigg|\\ &=& \bigg|\E_{\stackalign{a_s&\sim A\\b_s&\sim b_{s+1}}}\Bigl[(-1)^{f(a_s)}\cdot\backwardsvec{g}_s(a_s,b_s)\cdot g_{k-s}(a_s,b_{s+1})\Bigr]\bigg|,\end{eqnarray*} where $\backwardsvec{g}_s(a,b)$ indicates that the repalcement walk is drawn in the ``backwards'' fashion according to Steps 2(ii) and 3(ii) above.  Equivalently, $\backwardsvec{g}_s(a,b)$ is the expectation of $(-1)^{f(a_0)\oplus\cdots\oplus f(a_s)}$ over $(a_0,\dots,a_s)\sim\sw^s$ conditioned on $(a_s,b_s)=(a,b)$.

\begin{figure}[h]
\begin{center}
 
    \includegraphics[scale=0.3]{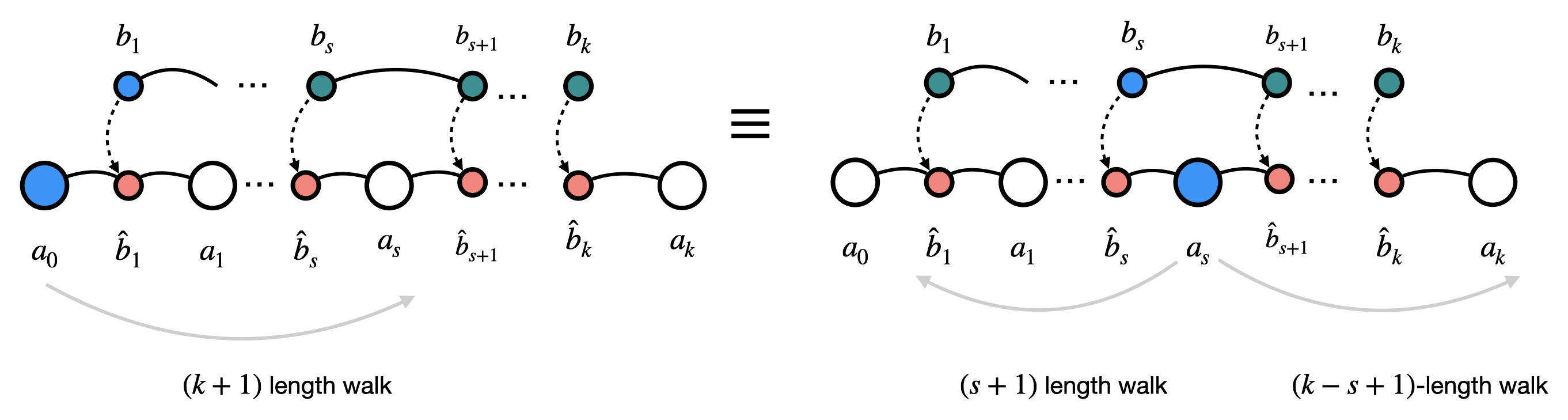}

\caption{Starting the Replacement Walk in the Middle.}
\vspace{-2 em}
 \label{fig:3}

\end{center}
\end{figure}

\subsection{Bounding the $\ep_k$ Terms}
\label{sec:epk}
In this section we bound the $\ep_k$ terms in Claims~\ref{clm:swidebasecase} and~\ref{clm:swideinduction}, thereby proving half of each claim.  We bound the $\sigma_k$ terms in the next section.

\paragraph{The Base Case.} This follows directly from the pseudorandomness property, and the analysis already done in Section~\ref{sec:techniques} (Claim~\ref{clm:exprw}).  Specifically, when $k\leq s$, we have \[\ep_k=\Big|\E_a\bigl[\ep_k(a)\bigr]\Big|=\Big|\E_a\bigl[h_{k+1}(a)\bigr]\Big|\leq\frac{1}{2}\cdot(2\lambda)^{k+1},\] where $\ep_k(a)=h_{k+1}(a)$ by pseudorandomness ($h_{k+1}$ is the function defined in Claim~\ref{clm:exprw}).

\paragraph{The Induction Step.} Fix $k>s$.  We have \[\ep_k =\bigg|\E_{\stackalign{a&\sim A\\b&\sim b'}}\Bigl[(-1)^{f(a)}\cdot\backwardsvec{g}_s(a,b)\cdot g_{k-s}(a,b')\Bigr]\bigg|\leq \bigg|\E_{a\sim A}\Bigl[(-1)^{f(a)}\cdot\backwardsvec{\ep}_s(a)\cdot\ep_{k-s}(a)\Bigr]\bigg|+\lambda\sigma_s\sigma_{k-s},\] where the equality holds by starting the replacement walk in the middle, and the inequality is the expander mixing lemma (Definition~\ref{def:expander}) on $B$.  We are using the shorthand $\backwardsvec{\ep}_s(a)$ for $\E_b\bigl[\backwardsvec{g}_s(a,b)\bigr]$, and we have used Cauchy-Schwarz to bound the standard deviation terms, just as we did in the computation in the ``ignore first step trick'' paragraph in Section~\ref{sec:intuition}.  Specifically, \[\E_a\bigl[\backwardsvec{\sigma}_s(a)\cdot\sigma_{k-s}(a)\bigr]\leq\sqrt{\E_a[\backwardsvec{\sigma}_s(a)^2]}\sqrt{\E_a[\sigma_{k-s}(a)^2]}\leq\backwardsvec{\sigma}_s\sigma_{k-s}=\sigma_s\sigma_{k-s}.\]  By pseudorandomness, $(-1)^{f(a)}\cdot\backwardsvec{\ep}_s(a)=(-1)^{f(a)}\cdot h_{s+1}(a)=\E_{a'\sim N(a)}\bigl[h_s(a')\bigr]=\E_{a'\sim N(a)}\bigl[\ep_{s-1}(a')\bigr]$, and so we get the desired bound on $\ep_k$ via the expander mixing lemma on $A$, as follows: \begin{eqnarray*}\ep_k &\leq& \Big|\E_{a\sim a'}\bigl[\ep_{s-1}(a)\cdot\ep_{k-s}(a')\bigr]\Big|+\lambda\sigma_s\sigma_{k-s}\leq\ep_{s-1}\ep_{k-s}+\lambda^2\sigma_{s-1}\sigma_{k-s}+\lambda\sigma_s\sigma_{k-s}\\ &\leq& \frac{1}{2}(2\lambda)^s(\ep_{k-s}+3\sigma_{k-s}).\end{eqnarray*}

\subsection{Bounding the $\sigma_k$ Terms}
\label{sec:sigmak}

\paragraph{The Base Case.} We have already noted that when $1\leq k\leq s$, $\ep_{k-1}(a)=h_k(a)$ by pseudorandomness.  Thus, $\E_a\bigl[\ep_{k-1}(a)^2\bigr]=\E_a\bigl[h_k(a)^2\bigr]\leq(2\lambda)^{2k-2}$, by Claim~\ref{clm:exprw}.  It follows from the first step trick that $\sigma_k^2\leq(2\lambda)^{2k-2}+\lambda^2\sigma_{k-1}^2$, which implies $\sigma_k\leq(2\lambda)^{k-1}+\lambda\sigma_{k-1}$.  Iterating this bound gives \[\sigma_k\leq\lambda^{k-1}\cdot\bigl(2^{k-1}+2^{k-2}+\cdots+2+1\bigr)\leq2\cdot(2\lambda)^{k-1}.\]

\paragraph{The Induction Step.} Fix $k>s$.  As mentioned in the ``ignore first step trick'' paragraph in Section~\ref{sec:intuition}, $\sigma_k^2\leq\E_a\bigl[\ep_{k-1}(a)^2\bigr]+\lambda^2\sigma_{k-1}^2$ holds and so it suffices to bound $\E_a\bigl[\ep_{k-1}(a)^2\bigr]$.  By starting the replacement walk in the middle, we get \[\E_a\bigl[\ep_{k-1}(a)^2\bigr]=\E_{\stackalign{a_{s-1}&\sim A\\ b_{s-1}&\sim b_s}}\Bigl[(-1)^{f(a_{s-1})}\cdot g_{k-s}(a_{s-1},b_s)\cdot G(a_{s-1},b_{s-1})\Bigr],\] where $G:A\times B\rightarrow\R$ is defined by $G(a,b):=\E_{(a_0,\dots,a_{s-1})}\bigl[(-1)^{f(a_{s-1})\oplus\cdots\oplus f(a_0)}\cdot\ep_{k-1}(a_0)\bigr]$, where the expectation is over $(a_0,\dots,a_{s-1})$ drawn as follows:
\begin{itemize}
    \item[$\cdot$] set $b_{s-1}=b$; for $1\leq i\leq s-2$, draw $b_i\sim N(b_{i+1})$ and then set $b_i={\sf shift}^{-1}(b_i)$;
    \item[$\cdot$] set $a_{s-1}=a$; for $0\leq i\leq s-2$ set $a_i=\phi^{-1}(a_{i+1},\hat b_{i+1})$.
\end{itemize}
The expander mixing lemma (Definition~\ref{def:expander}) on $B$ gives \[\E_a\bigl[\ep_{k-1}(a)^2\bigr]\leq\E_a\Bigl[(-1)^{f(a)}\cdot\ep_{k-s}(a)\cdot\mu_G(a)\Bigr]+\lambda\sigma_{k-s}\sigma_G,\] where $\mu_G:=\E_{a,b}\bigl[G(a,b)\bigr]$, $\mu_G(a):=\E_b\bigl[G(a,b)\bigr]$ and $\sigma_G$ is such that $\sigma_G^2+\mu_G^2=\E_{a,b}\bigl[G(a,b)^2\bigr]$.  By pseudorandomness, $\mu_G(a)=\E_{(a_0,\dots,a_{s-1})\sim\rw^s_A(a)}\bigl[(-1)^{f(a_0)\oplus\cdots\oplus f(a_{s-1})}\cdot\ep_{k-1}(a_{s-1})\bigr]=\hat h_s(a)$, where $\hat h_s:A\rightarrow\R$ is given by $\hat h_s(a)=\E_{(a_1,\dots,a_s)\sim\rw_A^s}\bigl[(-1)^{f(a_1)\oplus\cdots\oplus f(a_s)}\cdot\ep_{k-1}(a_s)\bigr]$.  Note this is the function defined in Claim~\ref{clm:genrw}, instantiated with $H(a)=\ep_{k-1}(a)$.  We have $(-1)^{f(a)}\cdot\mu_G(a)=\E_{a'\sim N(a)}\bigl[\hat h_{s-1}(a')\bigr]$, and so by the expander mixing lemma on $A$ and Claim~\ref{clm:genrw} we have \begin{eqnarray*}\E_a\bigl[\ep_{k-1}(a)^2\bigr]&\leq&\E_{a\sim a'}\bigl[\ep_{k-s}(a)\cdot\hat h_{s-1}(a')\bigr]+\lambda\sigma_{k-s}\sigma_G\\ &\leq& \ep_{k-s}\cdot2^{s-3}(\lambda^{s-2}\cdot\hat\ep_1+\lambda^{s-1}\hat\sigma_1)+\lambda^2\sigma_{k-s}\cdot2^{s-3}(\lambda^{s-3}\hat\ep_1+\lambda^{s-2}\hat\sigma_1)+\lambda\sigma_{k-s}\sigma_G,\end{eqnarray*} where $\hat\ep_1$ and $\hat\sigma_1$ are the notations from Claim~\ref{clm:genrw}.  In our case, $\hat\ep_1=\E_a\bigl[(-1)^{f(a)}\cdot\ep_{k-1}(a)\bigr]=\ep_{k-2}$, and $\hat\sigma_1=\sqrt{\E_a[\ep_{k-1}(a)^2]-\hat\ep_1^2}\leq\sqrt{\E_{a,b}[g_{k-1}(a,b)^2]-\hat\ep_1^2}=\sqrt{\sigma_{k-1}^2+\ep_{k-1}^2-\ep_{k-2}^2}\leq\sigma_{k-1}$.  We have used Jensen's inequality and that $\ep_{k-2}\geq\ep_{k-1}$.  Using these values and remembering the bound $\sigma_k^2\leq\E_a\bigl[\ep_{k-1}(a)^2\bigr]+\lambda^2\sigma_{k-1}^2$ gives \begin{equation}\label{eq:almostdone}\sigma_k^2\leq\frac{1}{2}(2\lambda)^{s-2}(\ep_{k-2}+\lambda\sigma_{k-1})(\ep_{k-s}+\lambda\sigma_{k-s})+\lambda\sigma_{k-s}\sigma_G+\lambda^2\sigma_{k-1}^2.\end{equation}  This is almost the required bound except we still need to simplify $\sigma_G$.  For this purpose, let us add a parameter to our notation for $G$, writing $G_{s-1}$ instead of $G$, since it is an expectation over a length $(s-1)$ ``backwards'' replacement walk.  For $r\leq s-1$, let $\mu_r:=\E_{a,b}\bigl[G_r(a,b)\bigr]$, let $\mu_r(a):=\E_b\bigl[G_r(a,b)\bigr]$ and $\tau_r$ such that $\tau_r^2+\mu_r^2=\E_{a,b}\bigl[G_r(a,b)^2\bigr]$.  We need to bound$\tau_{s-1}$.  By the ignore first step trick and expander mixing lemma on $B^2$, \[\tau_{s-1}^2\leq\E_{a,b}\bigl[G_{s-1}(a,b)^2\bigr]=\E_{\stackalign{a&\sim A\\ b&\sim_{B^2}                 b'}}\Bigl[G_{s-2}(a,b)\cdot G_{s-2}(a,b')\Bigr]\leq\E_a\bigl[\mu_{s-2}(a)^2\bigr]+\lambda^2\tau_{s-2}^2.\]  We have already seen that $\mu_{s-2}(a)=\hat h_{s-1}(a)$, and so by Claim~\ref{clm:genrw} and our computation of $\hat\ep_1$ and $\hat\sigma_1$ above,  $\tau_{s-1}^2\leq(2\lambda)^{2s-6}(\ep_{k-2}+\lambda\sigma_{k-1})^2+\lambda^2\tau_{s-2}^2$, which implies $\tau_{s-1}\leq(2\lambda)^{s-3}(\ep_{k-2}+\lambda\sigma_{k-1})+\lambda\tau_{s-2}$.  Iterating this bound (and using $\tau_0\leq\sigma_{k-1}$) gives \[\tau_{s-1}\leq\lambda^{s-3}(\ep_{k-2}+\lambda\sigma_{k-1})(2^{s-3}+2^{s-4}+\cdots)+\lambda^{s-1}\tau_0\leq2\cdot(2\lambda)^{s-3}(\ep_{k-2}+\lambda\sigma_{k-1})+\lambda^{s-1}\sigma_{k-1}.\]  Plugging this into (\ref{eq:almostdone}) gives the desired bound: \[\sigma_k^2\leq\frac{1}{2}(2\lambda)^{s-2}(\ep_{k-2}+\lambda\sigma_{k-1})\bigl(\ep_{k-s}+(2+\lambda)\sigma_{k-s}\bigr)+\lambda^s\sigma_{k-s}\sigma_{k-1}+\lambda^2\sigma_{k-1}^2.\]

\

\section{Expander Hitting Set Lemma}
Just for fun, we include a new proof of the classical expander hitting set lemma.

\begin{lemma}
\label{lem:hitting}
Let $A$ be a $\lambda-$expander, and let $S\subset A$ be a set of size $|S|=\rho|A|$.  Then for all $t\geq1$, \[\prob{(a_1,\dots,a_t)\sim\rw^t}\Bigl[a_i\in S\text{ }\forall\text{ }i=1,\dots,t\Bigr]\leq\rho\cdot\bigl(\rho+\lambda(1-\rho)\bigr)^{t-1}.\]
\end{lemma}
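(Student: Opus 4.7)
The plan is to reduce the hitting-set bound to a single Alon-Chung type inequality about pairs of functions supported on $S$, and then iterate. Define $h_k:A\to\R$ by $h_k(a)=\Pr_{(a_1,\dots,a_k)\sim\rw^k(a)}[a_i\in S\text{ for all }i]$, so that $h_1=\mathbf{1}_S$, the recursion $h_{k+1}(a)=\mathbf{1}_S(a)\cdot\E_{a'\sim N(a)}[h_k(a')]$ holds, and $h_k$ is supported on $S$ for every $k\geq 1$. The quantity I want to bound is $p_t=\E_a[h_t(a)]$.

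The key intermediate inequality is: for any $u,v:A\to\R$ whose supports are contained in $S$,
\[
\bigl|\E_{a\sim a'}[u(a)v(a')]\bigr|\;\leq\;\beta\cdot\|u\|_2\|v\|_2,\qquad\text{where }\beta:=\rho+\lambda(1-\rho)
\]
and $\|u\|_2^2:=\E_a[u(a)^2]$. To prove it, the EML (Definition~\ref{def:expander}) first gives $|\E_{a\sim a'}[uv]|\leq|\mu_u\mu_v|+\lambda\sigma_u\sigma_v$. Because $u$ is supported on $S$, Cauchy--Schwarz yields $|\mu_u|=|\E[u\mathbf{1}_S]|\leq\sqrt\rho\,\|u\|_2$, so $\alpha:=|\mu_u|/\|u\|_2$ lies in $[0,\sqrt\rho]$, and symmetrically $\gamma:=|\mu_v|/\|v\|_2\in[0,\sqrt\rho]$. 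Writing $\sigma_u=\|u\|_2\sqrt{1-\alpha^2}$ and $\sigma_v=\|v\|_2\sqrt{1-\gamma^2}$, the claim reduces to the scalar inequality $\alpha\gamma+\lambda\sqrt{(1-\alpha^2)(1-\gamma^2)}\leq\beta$ for $\alpha,\gamma\in[0,\sqrt\rho]$. This final step I would handle by a second, two-dimensional Cauchy--Schwarz: the left-hand side equals the inner product of the vectors $(\alpha,\sqrt\lambda\sqrt{1-\alpha^2})$ and $(\gamma,\sqrt\lambda\sqrt{1-\gamma^2})$, whose squared norms $\lambda+(1-\lambda)\alpha^2$ and $\lambda+(1-\lambda)\gamma^2$ are each at most $\lambda+(1-\lambda)\rho=\beta$.

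With the key inequality in hand, the iteration is short. Using the recursion for $h_{k+1}$ together with $\mathbf{1}_S^2=\mathbf{1}_S$, I compute
\[
\|h_{k+1}\|_2^2\;=\;\E_a\Bigl[h_{k+1}(a)\cdot\E_{a'\sim N(a)}[h_k(a')]\Bigr]\;=\;\E_{a\sim a'}\bigl[h_{k+1}(a)\,h_k(a')\bigr].
\]
Applying the key inequality with $u=h_{k+1}$ and $v=h_k$ (both supported on $S$) gives $\|h_{k+1}\|_2^2\leq\beta\|h_{k+1}\|_2\|h_k\|_2$, hence $\|h_{k+1}\|_2\leq\beta\|h_k\|_2$. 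Since $\|h_1\|_2=\|\mathbf{1}_S\|_2=\sqrt\rho$, induction yields $\|h_t\|_2\leq\sqrt\rho\cdot\beta^{t-1}$, and one final Cauchy--Schwarz on $p_t=\E_a[h_t(a)\mathbf{1}_S(a)]$ delivers $p_t\leq\|h_t\|_2\sqrt\rho\leq\rho\beta^{t-1}$, as desired.

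The main obstacle is the key inequality. A direct application of EML together with the trivial bounds $|\mu_u|\leq\|u\|_2$ and $\sigma_u\leq\|u\|_2$ would give only the coefficient $\rho+\lambda$, which is too weak to match the claimed hitting-set constant. The sharp constant $\beta=\rho+\lambda(1-\rho)$ requires exploiting the $S$-support hypothesis to cap both $\alpha$ and $\gamma$ at $\sqrt\rho$ simultaneously, and then balancing the mean against the variance via the two-dimensional Cauchy--Schwarz above; this is exactly what recovers the tight Alon-Chung constant while staying within the EML-only framework used throughout the paper.
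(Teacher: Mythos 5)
Your proof is correct, and it takes a genuinely different route from the paper's. The paper tracks the pair $(\ep_k,\sigma_k)$, exploits the doubling identity $\sigma_k^2+\ep_k^2=\E_a[g_k(a)^2]=\ep_{2k-1}$, splits the walk at its midpoint via $t=k+\ell$, and finishes with EML plus Cauchy--Schwarz; this forces separate handling of even and odd $t$, with the odd case requiring a completing-the-square step. You instead prove a sharpened form of the mixing lemma tailored to functions supported on $S$: for $u,v$ with $\supp(u),\supp(v)\subseteq S$,
\[
\bigl|\E_{a\sim a'}[u(a)v(a')]\bigr|\leq\bigl(\rho+\lambda(1-\rho)\bigr)\,\|u\|_2\|v\|_2,
\]
which you obtain by first applying EML and then noting that the normalized means $\alpha=|\mu_u|/\|u\|_2$, $\gamma=|\mu_v|/\|v\|_2$ are capped at $\sqrt\rho$ (by Cauchy--Schwarz against $\mathbbm{1}_S$), then optimizing $\alpha\gamma+\lambda\sqrt{(1-\alpha^2)(1-\gamma^2)}$ by viewing it as an inner product of two vectors each of squared norm $\lambda+(1-\lambda)\alpha^2\leq\lambda+(1-\lambda)\rho$. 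Feeding this into the identity $\|h_{k+1}\|_2^2=\E_{a\sim a'}[h_{k+1}(a)h_k(a')]$ (which holds because $\mathbbm{1}_S^2=\mathbbm{1}_S$, so $h_{k+1}(a)^2=h_{k+1}(a)\cdot\E_{a'\sim N(a)}[h_k(a')]$) yields the clean one-step contraction $\|h_{k+1}\|_2\leq\beta\|h_k\|_2$, which together with $\|h_1\|_2=\sqrt\rho$ and a final Cauchy--Schwarz against $\mathbbm{1}_S$ gives $p_t\leq\rho\beta^{t-1}$ with no parity case analysis. What each approach buys: the paper's proof stays strictly inside the vanilla EML framework and never needs to re-derive a refined spectral bound, at the cost of the doubling trick and the odd-$t$ bookkeeping; your proof front-loads the work into a single sharp lemma about $S$-supported functions (essentially the Alon--Chung bound on the restricted quadratic form), after which the induction is a one-line geometric decay of $\|h_k\|_2$. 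Yours is closer in spirit to the classical $\ell_2$-contraction proof of the hitting set lemma, repackaged in the paper's EML-only language, and is arguably the cleaner of the two.
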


\begin{proof}
Let $\mathbbm{1}_S:A\rightarrow\{0,1\}$ be the indicator function of $S$.  For $k\geq1$, define $g_k:A\rightarrow\mathbb{R}$ by \[g_k(a)=\prob{(a_1,\dots,a_k)\sim\rw^k(a)}\Bigl[a_i\in S\text{ }\forall\text{ }i=1,\dots,k\Bigr].\] Let $\ep_k:=\E_a\bigl[g_k(a)\bigr]$ and $\sigma_k$ be so $\sigma_k^2+\ep_k^2=\E_a\bigl[g_k(a)^2\bigr]$.  Our proof is by induction on $t$; it is clear that the lemma holds in the base case.  For $k\geq2$, note that $g_k(a)=\mathbbm{1}_S(a)\cdot\E_{a'\sim N(a)}\bigl[g_{k-1}(a')\bigr]$ holds, and so  \[\sigma_k^2+\ep_k^2=\E_a\bigl[g_k(a)^2\bigr]=\E_{\stackalign{a&\sim A\\a',&a''\sim N(a)}}\Bigl[\mathbbm{1}_S(a)\cdot g_{k-1}(a')\cdot g_{k-1}(a'')\Bigr]=\ep_{2k-1}.\]  We have used that $\mathbbm{1}_S(a)^2=\mathbbm{1}_S(a)$ holds for all $a\in A$, and that choosing $a\sim A$ and then two $(k-1)$ length walks starting at $a$ is identical to simply choosing a random walk of length $(2k-1)$.  Now, fix $t\geq2$ and $k,\ell\geq1$ such that $t=k+\ell$.  We have \begin{eqnarray*}
    \ep_t
        &=&
    \E_{(a_1,\dots,a_t)\sim\rw^t}\Bigl[\mathbbm{1}_S(a_1)\cdots\mathbbm{1}_S(a_t)\Bigr]=\E_{a\sim a'}\bigl[g_k(a)\cdot g_\ell(a')\bigr]\leq\ep_k\ep_\ell+\lambda\sigma_k\sigma_\ell\\
        &\leq&
    \sqrt{\ep_k^2+\lambda\sigma_k^2}\cdot\sqrt{\ep_\ell^2+\lambda\sigma_\ell^2}=\sqrt{(1-\lambda)\ep_k^2+\lambda\ep_{2k-1}}\cdot\sqrt{(1-\lambda)\ep_\ell^2+\lambda\ep_{2\ell-1}}
\end{eqnarray*}
where the last inequality on the first line is the expander mixing lemma on $A$ and the first inequality on the second line is Cauchy-Schwarz.  Note that if $2k-1<t$ then we can use induction to bound the terms on the right hand side: \[(1-\lambda)\ep_k^2+\lambda\ep_{2k-1}\leq\rho\cdot\bigl(\rho+\lambda(1-\rho)\bigr)^{2k-2}\cdot\bigl[(1-\lambda)\rho+\lambda\bigr]=\rho\cdot\bigl(\rho+\lambda(1-\rho)\bigr)^{2k-1}.\]  Therefore, if $t$ is even, we can set $k=\ell=t/2$ to obtain $\ep_t\leq\rho\cdot\bigl(\rho+\lambda(1-\rho)\bigr)^{t-1}$, as desired.  This does not fully work if $t$ is odd since if we set $k=\big\lceil t/2\big\rceil$ and $\ell=\big\lfloor t/2\big\rfloor$, then $2k-1=t$ and so we cannot use induction to bound $\ep_{2k-1}$.  However, we can bound $\ep_k$, $\ep_\ell$, $\ep_{2\ell-1}$ by induction; this gives \[\ep_t^2\leq\Bigl((1-\lambda)\rho^2\bigl(\rho+\lambda(1-\rho)\bigr)^{2k-2}+\lambda\ep_t\Bigr)\cdot\Bigl(\rho\bigl(\rho+\lambda(1-\rho)\bigr)^{2\ell-1}\Bigr)=2A\cdot\ep_t+B,\] where $A=\frac{\lambda\rho}{2}\cdot\bigl(\rho+\lambda(1-\rho)\bigr)^{t-2}$ and $B=(1-\lambda)\rho^3\bigl(\rho+\lambda(1-\rho)\bigr)^{2t-3}$.  Collecting the terms in this way allows us to proceed by completing the square.  We get $\ep_t\leq A+\sqrt{A^2+B}$ and we complete the proof by showing that $A+\sqrt{A^2+B}=\rho\bigl(\rho+\lambda(1-\rho)\bigr)^{t-1}$.  For this last calculation, set the shorthand $\Phi:=\rho+\lambda(1-\rho)$.  We have \[A+\sqrt{A^2+B}=\rho\cdot\Phi^{t-2}\cdot\biggl[\frac{\lambda}{2}+\sqrt{\frac{\lambda^2}{4}+\rho(1-\lambda)\Phi}\biggr]=\rho\cdot\Phi^{t-1},\] where the final equation holds because $\Phi=\lambda/2+\sqrt{\lambda^2/4+\rho(1-\lambda)\Phi}$, which is verified by a simple calculation.
\end{proof}

\section*{Acknowledgement}
The authors would like to thank Prahladh Harsha and Aparna Shankar for many helpful discussions.

\bibliographystyle{alpha}
\bibliography{refs}

\newcommand{\etalchar}[1]{$^{#1}$}
\begin{thebibliography}{ABN{\etalchar{+}}92}

\bibitem[ABN{\etalchar{+}}92]{ABNNR92}
Noga Alon, Jehoshua Bruck, Joseph Naor, Moni Naor, and Ron~M Roth.
\newblock Construction of asymptotically good low-rate error-correcting codes
  through pseudo-random graphs.
\newblock {\em IEEE Transactions on information theory}, 38(2):509--516, 1992.

\bibitem[AGHP92]{AGHP92}
Noga Alon, Oded Goldreich, Johan H{\aa}stad, and Ren{\'{e}} Peralta.
\newblock Simple construction of almost k-wise independent random variables.
\newblock {\em Random Struct. Algorithms}, 3(3):289--304, 1992.

\bibitem[Alo21]{ALON21}
Noga Alon.
\newblock Explicit expanders of every degree and size.
\newblock {\em Combinatorica}, pages 1--17, 2021.

\bibitem[BATS11]{BT11}
Avraham Ben-Aroya and Amnon Ta-Shma.
\newblock A combinatorial construction of almost-ramanujan graphs using the
  zig-zag product.
\newblock {\em SIAM Journal on Computing}, 40(2):267--290, 2011.

\bibitem[CJW19]{ChenW19}
Lijie Chen, Ce~Jin, and R~Ryan Williams.
\newblock Hardness magnification for all sparse np languages.
\newblock In {\em 2019 IEEE 60th Annual Symposium on Foundations of Computer
  Science (FOCS)}, pages 1240--1255. IEEE, 2019.

\bibitem[DK17]{DK17}
Irit Dinur and Tali Kaufman.
\newblock High dimensional expanders imply agreement expanders.
\newblock In {\em 2017 IEEE 58th Annual Symposium on Foundations of Computer
  Science (FOCS)}, pages 974--985. IEEE, 2017.

\bibitem[Gil52]{Gil52}
E.~N. Gilbert.
\newblock A comparison of signalling alphabets.
\newblock {\em The Bell System Technical Journal}, 31(3):504--522, 1952.

\bibitem[Plo60]{PLOT60}
Morris Plotkin.
\newblock Binary codes with specified minimum distance.
\newblock {\em IRE Transactions on Information Theory}, 6(4):445--450, 1960.

\bibitem[Sha79]{SHAMIR79}
Adi Shamir.
\newblock How to share a secret.
\newblock {\em Communications of the ACM}, 22(11):612--613, 1979.

\bibitem[STV01]{STV01}
Madhu Sudan, Luca Trevisan, and Salil Vadhan.
\newblock Pseudorandom generators without the xor lemma.
\newblock {\em Journal of Computer and System Sciences}, 62(2):236--266, 2001.

\bibitem[TS17]{TaShma17}
Amnon Ta-Shma.
\newblock Explicit, almost optimal, epsilon-balanced codes.
\newblock In {\em Proceedings of the 49th Annual ACM SIGACT Symposium on Theory
  of Computing}, pages 238--251, 2017.

\bibitem[Var57]{Var57}
R.~R. Varshamov.
\newblock Estimate of the number of signals in error correcting codes.
\newblock {\em Docklady Akad. Nauk, S.S.S.R.}, 117:739--741, 1957.

\end{thebibliography}

\appendix

\end{document}